\newtheorem{definition}{Definition}
\newtheorem{example}{Example}
\newtheorem{proposition}{Proposition}
\newtheorem{thm}{Theorem}
\newtheorem{lemma}{Lemma}
\let\defi\emph 
\let\abbrv\textsf 
\newcommand*{\crossupsidedown}{%
    \textbf{%
      \raisebox{.5ex}{%
        \makebox[0pt][l]{%
          \rule[-.2pt]{.75ex}{.4pt}%
        }%
        \makebox[.75ex]{%
          \rule[-.5ex]{.4pt}{1.5ex}%
        }%
      }%
    }%
}
\def\tradatl#1{(#1)^\circ}
\def\Ap{\mathsf{AP}}
\def\Ag{\mathsf{Ag}}
\def\sem#1{ \llbracket #1 \rrbracket }
\def\sabotage{\crossupsidedown}
\def\tuple#1{\langle #1 \rangle}
\def\set#1{\{#1\}}
\def\next{\mathsf{X}\,}
\def\until{\, \mathsf{U} \,}
\def\untilJ{\, \mathsf{U_J} \,}
\def\release{\, \mathsf{R}\,}
\def\releaseJ{\, \mathsf{R_J}\,}
\def\globaly{\mathsf{G}\,}
\def\eventualy{\mathsf{F}\,}
\def\M{\mathfrak{M}}
\def\N{\mathfrak{N}}
\def\strat{\mathfrak{S}}
\def\path{\mathcal{P}}
\def\strat{\mathfrak{S}}
\def\naww#1{\langle   #1  \rangle }
\def\exis#1{\langle \! \langle #1 \rangle \! \rangle}
\def\tradatl#1{(#1)^{\mathsf{A}}}
\def\tradTmu#1{(#1)^{T\mu}}
\algnewcommand\algorithmicswitch{\textbf{switch}}
\algnewcommand\algorithmiccase{\textbf{case}}
\begin{document}

\begin{frontmatter}
\onecolumn

\title{A Timed Obstruction Logic for Dynamic Game Models}

\author[A]{\fnms{David}~\snm{Cortes}\orcid{0009-0002-5771-8817}}
\author[A]{\fnms{Jean}~\snm{Leneutre}}
\author[A]{\fnms{Vadim}~\snm{Malvone}\orcid{0000-0001-6138-4229}} 
\author[A]{\fnms{James}~\snm{Ortiz}\thanks{Corresponding Author. Email: james.ortizvega@telecom-paris.fr}}
\address[A]{LTCI, Institut Polytechnique de Paris, Télécom Paris}
 
\begin{abstract}
%
Real-time cybersecurity and privacy applications require reliable verification methods and system design tools to ensure their correctness. Many of these reactive real-time applications embedded in various infrastructures, such as airports, hospitals, and oil pipelines, are potentially vulnerable to malicious cyber-attacks. Recently, a growing literature has recognized Timed Game Theory as a sound theoretical foundation for modeling strategic interactions between attackers and defenders. This paper proposes Timed Obstruction Logic (\abbrv{TOL}), an extension of Obstruction Logic (\abbrv{OL}), a formalism for verifying specific timed games with real-time objectives unfolding in dynamic models. These timed games involve players whose discrete and continuous actions can impact the underlying timed game model. We show that \abbrv{TOL} can be used to describe important timed properties of real-time cybersecurity games. Finally, in addition to introducing our new logic and adapting it to specify properties in the context of cybersecurity, we provide a verification procedure for \abbrv{TOL} and show that its complexity is \abbrv{PSPACE-complete}, meaning that it is not higher than that of classical timed temporal logics like \abbrv{TCTL}. Thus, we increase the expressiveness of properties without incurring any cost in terms of complexity.

\end{abstract}

\end{frontmatter}

\section{Introduction}

Multi-agent systems (\abbrv{MAS}) are usually understood as systems composed of interacting autonomous agents~\cite{KVW01, AHK02, LomuscioQuRaimondi09, MogaveroMPV14, Jamroga15modulestrats, Jennings+98b}. In this sense, \abbrv{MAS} have been successfully applied as a modeling paradigm in several scenarios, particularly in the area of cybersecurity and distributed systems. However, the process of modeling security and heterogeneous distributed systems is inherently error-prone: thus, computer scientists typically address the issue of verifying that a system actually behaves as intended, especially for complex systems.

Some techniques have been developed to accomplish this task: testing is the most common technique, but in many circumstances, a formal proof of correctness is required. Formal verification techniques include theorem checking and model checking~\cite{ClarkeGrumbergPeled99}. In particular, model checking techniques have been successfully applied to the formal verification of security and distributed systems, including hardware components, communication protocols, and security protocols. Unlike traditional distributed systems, formal verification techniques for Real-Time \abbrv{MAS} (\abbrv{RT-MAS}) \cite{CRS21, CTh17} are still in their infancy due to the more sophisticated nature of the agents, their autonomy, their real-time constraints, and the richness of the formalisms used to specify properties~\cite{CE81, KVW00}.

\abbrv{RT-MAS} combines game theory techniques with real-time behavior in a distributed environment \cite{CRS21, CTh17, GNC20}. Such real-time behavior can be verified using real-time modal logic \cite{OAS19, APS23}.  The development of methods and techniques with integrated features from both research areas undoubtedly leads to an increase in complexity and the need to adapt current techniques or, in some cases, to develop new formalisms, techniques, and tools \cite{QAS19, QAF15, ABZ07}. Developing these formalisms correctly requires algorithms, procedures, and tools to produce reliable end results \cite{StrategicTool, DynamicVadim}.

Agents in \abbrv{RT-MAS} are considered to be players in games played over real-time models (such as Timed Automata (\abbrv{TA}) \cite{alurD94} and Timed Petri Nets \cite{KnapikPetrucciJamrogaTimed}), and their goals are specified by real-time logic formulas \cite{NNA23, QAF15, ABZ07, KnapikPetrucciJamrogaTimed, APS23}. For example, the fact that a coalition of players has a strategy to achieve a certain goal by acting cooperatively can be expressed using the syntax of logics such as Timed Alternating-time Temporal Logic (\abbrv{TATL})~\cite{HPV06,LaroussinieMO06}. For example, the fact that a coalition of players has a strategy to achieve a certain goal by acting cooperatively can be expressed using the syntax of logics such as \abbrv{TATL} and Strategic Timed Computation Tree Logic (\abbrv{STCTL})~\cite{APS23}. However, \abbrv{STCTL} with continuous semantics is more expressive than \abbrv{TATL}, as shown in \cite{APS23}. Moreover, in \cite{APS23} was shown that the model checking problem for \abbrv{STCTL}  with continuous semantics and memoryless perfect information is of the same complexity as for \abbrv{TCTL}, while for \abbrv{STCTL} with continuous semantics and perfect recall is undecidable. Model checking for \abbrv{TATL} with continuous semantics is undecidable \cite{APS23}.  
In all previous logics, the timed game model in which the players are playing is treated as a static game model, i.e., the actions of the players affect their position within the model, but do not affect the structure of the model itself. 

In this paper, we propose a  new logic, Timed Obstruction Logic (\abbrv{TOL}), for reasoning about \abbrv{RT-MAS} with real-time goals \cite{NNA23, QAF15, ABZ07, KnapikPetrucciJamrogaTimed, APS23} played in a dynamic model. Dynamic game models \cite{NelloPlanning, DynamicVadim, CattaLM23} have been studied in a variety of contexts, including cybersecurity and planning. In our new logic (\abbrv{TOL}), games are played over an extension \abbrv{TA} (Weighted \abbrv{TA} (\abbrv{WTA}) \cite{ASG01}) by two players (Adversary and Demon). There is a \abbrv{cost} ($W(e)$) associated with each edge of the automaton. This means that, given a location $l$ of the automaton and a natural number $n$, the Demon deactivates an appropriate subset $T$ of the set of edges incident to $l$ such that the sum of the deactivation costs of the edges contained in $T$ is less than $n$. Then, the Adversary selects a location $l'$ such that $l$ is adjacent to $l'$ and the edge from $l$ to $l'$ does not belong to the set of edges selected by the Demon in the previous round. The edges deactivated in the previous round are restored, and a new round starts at the last node selected by the opponent. The Demon wins the timed game if the infinite sequence of nodes subsequently selected by the opponent satisfies a certain property $\varphi$ expressed by a timed temporal formula. Furthermore, in addition to the introduction of our new logic and its adaptation to the specification of properties in the context of real-time cybersecurity, we provide a verification procedure for \abbrv{TOL} and show that its complexity is \abbrv{PSPACE-complete}, i.e., not higher than that of classical timed temporal logics such as \abbrv{TCTL}. Thus, we increase the expressiveness of properties without incurring a cost in terms of complexity.




\textbf{Structure of the work.} The contribution is structured as follows. 
Theoretical background is presented in Section~\ref{sec:background}. In Section~\ref{sec:tol}, we present the syntax and the semantics of our new logic, called Timed Obstruction Logic (\abbrv{TOL}). In Section~\ref{sec:mc}, we show our model checking  algorithm and prove that the model checking problem for \abbrv{TOL} is \abbrv{PSPACE-Complete}.  In Section~\ref{sec:relationship}, we compare \abbrv{TOL} with other timed logic for strategic and temporal reasoning. In Section~\ref{sec:example}, we present our case study in the cybersecurity context. In Section~\ref{sec:relwork}, we compare our approach to related work. Finally, Section~\ref{sec:end}  concludes and presents possible future directions.

\section{Background}
\label{sec:background}


\paragraph{General Concepts.} 
Let $\mathbb{N}$ be the set of natural numbers, we refer to the set of natural numbers containing $0$ as $\mathbb{N}_{\ge 0}$, $\mathbb{R}_{\ge 0}$ the set of non-negative reals and $\mathbb{Z}$ the set of integers.  
Let $X$ and $Y$ be two sets and $|X|$ denotes its cardinality. The set operations of intersection, union, complementation, set difference, and Cartesian product are denoted $X$ $\cap$ $Y$, $X$ $\cup$ $Y$, $\overline{X}$, $X$ $\setminus$ $Y$, and $X$ $\times$ $Y$, respectively. Inclusion and strict inclusion are denoted $X$ $\subseteq$ $Y$ and $X$ $\subset$ $Y$, respectively. The empty set is denoted $\emptyset$. Let $\pi=x_1,\ldots,x_n$ be a finite sequence, $last(\pi)$ denotes the last element $x_n$ of $\pi$, we use $\sqsubseteq$ (resp. $\sqsubset$) to denote the prefix relation (resp. strict prefix relation).  Let $\Sigma$ be a finite alphabet of actions. The set of all finite words over $\Sigma$ will be denoted by $\Sigma^{*}$. A timed action over an alphabet $\Sigma$ is a finite sequence $\theta$ = $((\sigma_1, t_1), (\sigma_2, t_2) \cdot (\sigma_n, t_n))$ of actions $\sigma_i$ $\in$ $\Sigma$ that are paired with non-negative real numbers $t_i$ $\in$ $\mathbb{R}_{\ge 0}$ such that the sequence $t = t_1t_2 \cdots t_n$ of time-stamps is non-decreasing (i.e., $t_i \le t_{i+1}$ for all $1 \le i < n$). 
\paragraph{Attack Graphs and Moving Target Defense Mechanisms.} A malicious attack is defined as an attempt made by an attacker to gain unauthorized access to resources or to compromise the integrity of the system as related to system security controls. In this context, the Attack Graph (\abbrv{AG}) \cite{KK2016}\cite{CattaLM23} is one of the most popular attack models that has been created and is receiving a lot of attention recently. 
Leveraging an \abbrv{AG}, it is possible to model the interactions between an attacker, and a defender able to dynamically deploy \textit{Moving Target Defense (\abbrv{MTD})} mechanisms~\cite{Cho2020}. \abbrv{MTD} mechanisms, such as \textit{Address Space Layout Randomization (\abbrv{ALSR})}~\cite{app9142928}, are active defense mechanisms that use partial reconfiguration of the system to dynamically alter the attack surface and lower the attack’s chance of success. 
As a drawback, the activation  of a \abbrv{MTD} countermeasure has an impact on system performance: during the reconfiguration phase the services of the system are partially or not available. It is therefore critical to be able to select \abbrv{MTD} deployment strategies both minimizing the residual cybersecurity risks and the negative impact on the performance of the system.

\subsection{Weighted Transition Systems} Weighted Transition Systems (\abbrv{WTS}) are an extension of the standard notion of (Labeled) Transition Systems (\abbrv{LTS}) \cite{PLK81}, which has been used to introduce operational semantics for a variety of reactive systems. An \abbrv{AG} can be defined as a \abbrv{WTS}. 
\begin{definition} [Weighted Transition Systems (\abbrv{WTS})] Let $\Ap$ be a finite set of atomic propositions. A \abbrv{WTS} is a tuple $\mathcal{M} = (\textit{S}, \textit{s}_{0}, \Sigma, \textit{E}, \textit{W}, K, \textit{F})$ where:

\begin{itemize} 
	 \item $S$ is a finite set of states,
    \item $s_0\in S$ is an initial state,
    \item $\Sigma$ is a finite set of actions, 
     \item $E \subseteq S \times \Sigma \times S$ is a transition relation, 
    \item $W$: $E \to \mathbb{N}_{\ge 0}$ is a function that labels the elements of $E$, 
     \item $K$: $S \to 2^{\Ap}$ is a labeling function for the states,
    \item $F \subseteq S$ is a set of goal states.
\end{itemize}	
\end{definition}	
	
\noindent 
The transitions from state to state of a \abbrv{WTS} are noted in the following way: we write $\textit{s} \xrightarrow[w]{a} \textit{s}'$ whenever $a$ $\in$ $\Sigma$, $(s, a, s')$ $\in$ $E$ and $W(s, a, s')$ $=$ $w$ where $w$ $\in$ $\mathbb{N}_{\ge 0}$. 
A path of $\mathcal{M}$ can be defined as a finite (resp. infinite) sequence of moves:  $\varrho$ = $s_0 \xrightarrow[w_1]{\textit{a}_1} s_1 \xrightarrow[w_2]{\textit{a}_2} s_2 \ldots s_{n-2} \xrightarrow[w_{n-1}]{\textit{a}_{n-1}} s_{n-1}$, where $\forall 0 \le i \le n-1$,  $\forall j \ge 1,$ $w_{j}$ $\in$ $\mathbb{R}_{\ge 0}$ and $a_{j} \in \Sigma$. A path is \defi{initial} if it starts in $s_{0}$. Thus, an initial path describes one execution of the system. A (Weighted) trace from $\varrho$  is a sequence $((a_1, w_1), (a_2, w_2), \ldots, (a_n, w_n))$ of pairs $(a_i, w_i)$ $\in$ $\Sigma \times  \mathbb{R}_{\ge 0}$ for which there exists a path $\varrho$ from which $w_i$ = $W(s_i, a_i, s_{i+1})$.

\begin{example}
Figure~\ref{fig:exp2} gives an example of a \abbrv{WTA}. States of the \abbrv{WTA} are denoted as $s_i$, with $0\leq i  \leq 5$, attack actions and weighted as edge labels $a_j$, with $1\leq j  \leq 7$, $w_j$, with $1 \leq j  \leq 11$.
\begin{figure}[h]
 \centering
  \includegraphics[width=74mm, height=25mm]{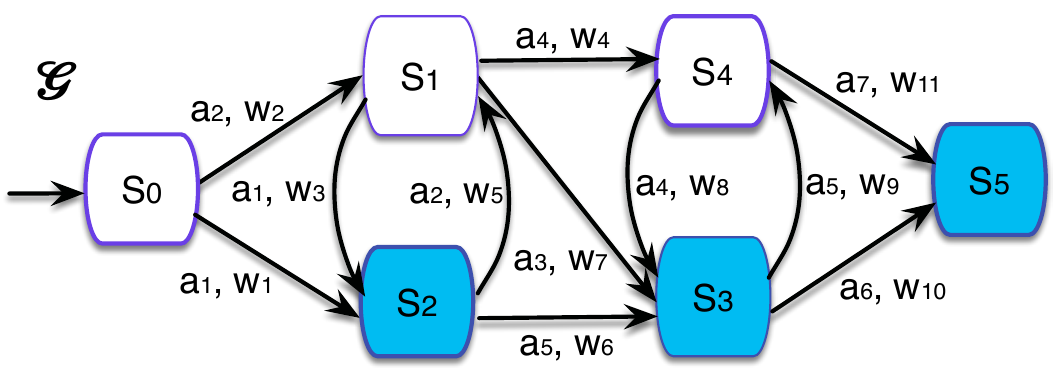}
    \vspace{-4pt}
  \caption{A \abbrv{WTS} where states $s_2,\, s_3$ and $s_5$ represents the goal states of the attacker.}
  \label{fig:exp2}
 \end{figure}
\end{example}



\paragraph{Clocks and Timed Automata.} We use non-negative real-valued variables known as \textit{clocks} to represent the continuous time domain. Clocks are variables that advance synchronously at a uniform rate; they are the basis of \abbrv{TA} \cite{alurD94}. Model checkers such as \abbrv{UPPAAL} \cite{behrmann2006tutorial}, \abbrv{KRONOS} \cite{bozga98}, and \abbrv{HYTECH} \cite{Henzinger97} support and extend \abbrv{TA}. 
Clock constraints within \abbrv{TA} control transitions. 
Here, we work with an extension of \abbrv{TA} known as Weighted Timed Automata (\abbrv{WTA}) \cite{ASG01}.

\subsection{Weighted Timed Automata}


We now explore the relation between \abbrv{WTS} and Weighted Timed Automata  (\abbrv{WTA}) \cite{ASG01}. A \abbrv{WTA} is an extension of a \abbrv{TA} \cite{alurD94} with weight/cost information at both locations and edges, and it can be used to address several interesting questions \cite{BFLM11, ASG01}.

\begin{definition}[Clock constraints and invariants]
Let $\textit{X}$ be a finite set of clock variables ranging over $\mathbb{R}_{\ge 0}$ (non-negative real numbers). Let $\Phi(\textit{X})$ be a set of clock constraints over  
$\textit{X}$. A \textit{clock constraint} $\phi\in\Phi(X)$ can be defined by the following grammar: 
\begin{center}
$\phi ::= true \ | \ \textit{x} \sim \textit{c} \ | \ \phi_{1} \ \wedge \ \phi_{2} $  \\
\end{center}
where  $\textit{x} \in \textit{X}$, $\textit{c} \in \mathbb{N}$, and $\sim  \in  \{<, >, \le, \ge, =\}$.

Clock invariants $\Delta(\textit{X})$ are clock constraints in which $\sim  \in  \{<,\le\}$.
\end{definition}



\begin{definition}[Clock valuations] Given a finite set of clocks $\textit{X}$, a clock valuation function, $\nu : \textit{X} \to \mathbb{R}_{\ge 0}$ assigning to each clock $x$ $\in$ $\textit{X}$ a non-negative value $\nu(x)$. We denote $\mathbb{R}_{\ge 0}^{X}$ the set of all valuations. For a clock valuation $\nu\in\mathbb{R}_{\ge 0}^{X}$ and a time value $\textit{d} \in  \mathbb{R}_{\ge 0}$, $\nu + \textit{d}$ is the valuation satisfied by $(\nu + d)(x) = \nu(x) + d$ for each $x$ $\in$ $\textit{X}$. Given a clock subset $Y \subseteq X$, we denote $\nu[Y \gets 0]$ the valuation defined as follows: $\nu[Y \gets 0](x) = 0$ if $x\in Y$ and  $\nu[\textit{Y} \gets 0](x) = \nu(x)$ otherwise.
\end{definition}

Here, we only consider the weight/cost in the edges (transitions) in our \abbrv{WTA}. Formally, a \abbrv{WTA} is defined as follows \cite{ASG01}.




\begin{definition}[Weighted Timed Automata (\abbrv{WTA})]
\label{def:wta}
Let $X$ be a finite set of clocks and $\Ap$ a finite set of atomic propositions.
A \abbrv{WTA} is a tuple $\mathcal{A}$ = $(L, l_0, X, \Sigma, T, I, W,  K,  F)$, where:
\begin{itemize}
    \item $L$ is a finite set of locations,
    \item $l_0\in L$ is an initial location,
    \item $X$ is a finite set of clocks,
    \item $\Sigma$ is a finite set of actions,
    \item $T \subseteq L  \times \Sigma \times \Phi (X) \times 2^{X} \times L$ is a finite set of edges (or transitions), 
    \item $I: L \rightarrow \Delta (X)$ is a function that associates to each location a clock invariant, 
     \item $W$: $T  \to \mathbb{N}_{\ge 0}$ is a function that labels the elements of $T$, 
      \item $K: L \to 2^{\Ap}$ is a labeling function for the locations,
    \item $F \subseteq L$ is a set of goal locations.
\end{itemize}
\end{definition}

We write $\textit{l} \xrightarrow[w]{a, \phi, Y} \textit{l}'$ instead of $(l, a, \phi, Y,l')_{w}$ $\in T$  for an edge from $l$ to $l'$ with guard $\phi$ $\in$ $\Phi(X)$, reset set $Y$ $\subseteq$ $X$ and $w$ $\in$ $\mathbb{N}_{\ge 0}$.  The value $W(t)$ given to edge $t$ = $(l, a, \phi, Y,l')_{w}$ where $t$ $\in$ $T$ represents the cost of taking that edge. In this paper, the value $W(t)$ given to edge $t$ represents the deactivation cost. 
Since cost information cannot be employed as constraints on edges, the undecidability of Hybrid Automata (\abbrv{HA}) \cite{Henzinger97} is avoided in the case of \abbrv{WTA} \cite{BFLM11} (i.e.,  decidability results are preserved for \abbrv{WTA}). In \abbrv{WTA}, costs are explicitly defined in its syntax, however, they do not influence the discrete behavior of the system. Since there is no cost constraint, the semantics of a \abbrv{WTA} is similar to that of a \abbrv{TA}. It is thus given as a \abbrv{WTS}. 

\begin{definition}[Semantics of \abbrv{WTA}]
\label{defSemTA}
Let $\mathcal{A} = (L, l_0, X, \Sigma, T, I, W, K, F)$ be a \abbrv{WTA}. The semantics of \abbrv{WTA} $\mathcal{A}$ is given by  a \abbrv{WTS}$({\mathcal{A}})$ = $(\textit{S}, \textit{s}_{0}, \Sigma_{\Delta}, E, W', K', \textit{S}_{F})$  where:

\begin{itemize}
    \item $S \subseteq L$ $\times$ $\mathbb{R}^{\textit{X}}_{\ge 0}$ is a set of states,
    \item $s_0= (l_0, \nu_0)$ with $\nu_0(x)=0 $ for all $x\in \textit{X}$ and $\nu_0$ $\models$ $I(l_0)$,
     \item $S_F \subseteq F$ $\times$ $\mathbb{R}^{\textit{X}}_{\ge 0}$ is a set of states,
    \item $\Sigma_{\Delta}$ = $\Sigma$   $\uplus$ $\mathbb{R}_{\ge 0}$,
    \item $W'$ = $W$, 
    \item $K'((l, \nu))$ = $K(l)$ $\cup$ $\{\phi \in \Phi(X) \mid \nu \models \phi \}$, 
    \item $E \subseteq S\times \Sigma_{\Delta} \times S$ is a transition relation defined by the following two rules:
    \begin{itemize}
        \item \textbf{Discrete transition: } $(l, \nu)\xrightarrow[w]{a} (l', \nu')$ 
        for $a\in \Sigma$ and $w$ $\in$ $\mathbb{N}_{\ge 0}$ iff $\textit{l} \xrightarrow[w]{a, \phi, Y} \textit{l}'$, $\nu$ $\models$ $\phi$, $\nu'$ $=$ $\nu[Y \gets 0]$ and $\nu'$ $\models$ $\textit{I}(l')$ and, 
        \item \textbf{Delay transition: } $(l, \nu)\xrightarrow{d} (l, \nu+d)$, for some  $d\in \mathbb{R}_{\geq 0}$ iff $\nu + d$ $\models$ $\textit{I}(l)$. 
    \end{itemize}
\end{itemize}

\end{definition}

\subsection{Paths and n-strategy}

A path $\rho$ in $\abbrv{WTS}(\mathcal{A})$ is an infinite sequence of consecutive delays and discrete transitions. A finite path fragment of $\mathcal{A}$ is a run in \abbrv{WTS}($\mathcal{A}$) starting from the initial state $s_0 = (l_0, \nu_0)$, with delay and discrete transitions alternating along the path:  $\rho$ = $s_0 \xrightarrow{d_0} s_1' \xrightarrow[w_0]{\textit{a}_0} s_1 \xrightarrow{d_1} s_2' \xrightarrow[w_1]{\textit{a}_1} s_2 \ldots s_{n-1} \xrightarrow{d_{n-1}} s_{n}'  \xrightarrow[w_n]{a_{n}} s_n$ $\ldots$ or more compactly  $s_0 \xrightarrow[w_0]{d_0, a_0} s_1 \xrightarrow[w_1]{d_1, \textit{a}_1} s_2 \xrightarrow[w_2]{d_2, a_2} s_3 \ldots s_{n-1} \xrightarrow[w_{n-1}]{d_{n-1}, a_{n-1}} s_{n}$ $\ldots$, where $\nu_0(x) = 0$ for every $x$ $\in$ $X$. A path of \abbrv{WTS}$({\mathcal{A}})$ is \textit{initial} if $s_0 = (l_0, \nu_0)$ $\in$ $S$, where $l_0$ $\in$ $L$, $\nu_0$ assigns 0 to each clock, and \textit{maximal} if it ends in a goal location. We write $\rho_i$ to denote the $i$-th element $s_i = (l_i, \nu_i)$  of $\rho$, $\rho_{\leq i}$ to denote the prefix $s_0,\ldots, s_{i}$ of $\rho$ and $\rho_{\geq i}$ to denote the suffix $s_i, s_{i+1}\ldots$ of $\rho$. 
A \emph{history} is any finite prefix of some path. We use $H$ to denote the set of histories.  

\begin{definition}
Let  $\mathcal{A}$ be a \abbrv{WTA} and $n$ be a natural number.  Given a model \abbrv{WTS}($\mathcal{A}$), a  $n$-strategy is a function $\strat: H\to 2^{E}$ that, given a history $h$, returns a subset $E'$ such that: 
    (i) $E' \subset E(last(h))$, 
    (ii) $(\sum_{e\in E'} \mathsf{W}(e)) \leq n$. 
 A memoryless $n$-strategy is a $n$-strategy $\strat$ such that for all histories $h$ and $h'$ if $last(h)=last(h')$ then $\strat(h)=\strat(h')$. 
\end{definition}

A path $\rho$ is compatible with a $n$-strategy  if for all $i$ $\ge 1$, $(\rho_i, \sigma, \rho_{i+1})$ $\notin$  $\strat(\rho_{\leq i})$, where $\sigma \in \Sigma_\Delta$.
Given a state $s = (l, \nu)$ and a $n$-strategy $\strat$, $Out(s,\strat)$ refers to the set of pathways whose first state is $s$ and are consistent with $\strat$.


\begin{example}
Let $\mathcal{M}$ be the \abbrv{WTA} depicted in Fig \ref{fig:wta}. $\mathcal{A}$ contains ten locations:  $L_0$ (initial) and $L_7$, $L_8$ and $L_9$ are the goal locations. For the sake, all locations define invariants to be true. The transition $L_0 \xrightarrow{a, (x \le 2), \{x:=0\}} L_1$ specifies that when the input action $a$ occurs and the guard $x \le 2$ holds, this enables the transition, leading to a new current location $L_1$, while resetting clock variables $x:=0$ and 3 as weight/cost. The transition $L_1 \xrightarrow{a, (x \le 2), \{x:=0\}} L_2$ specifies that when the input action $b$ occurs and the guard $x < 1$ holds, this enables the transition, leading to a new current location $l_2$, while resetting clock variables $x:=0$ and 2 as weight/cost. Likewise, the transition $L_1 \xrightarrow{c, (2< y < 3), \{y:=0\}} L_3$ specifies that when the input action $c$ occurs and the guard $2< y < 3$ holds, this enables the transition, leading to a new current location $L_3$, while resetting clock variables $y:=0$ and 1 as weight/cost. Likewise, we can see the other transitions in Fig \ref{fig:wta}.

\begin{figure}[h]
  \centering
  \includegraphics[width=82mm, height=47mm]{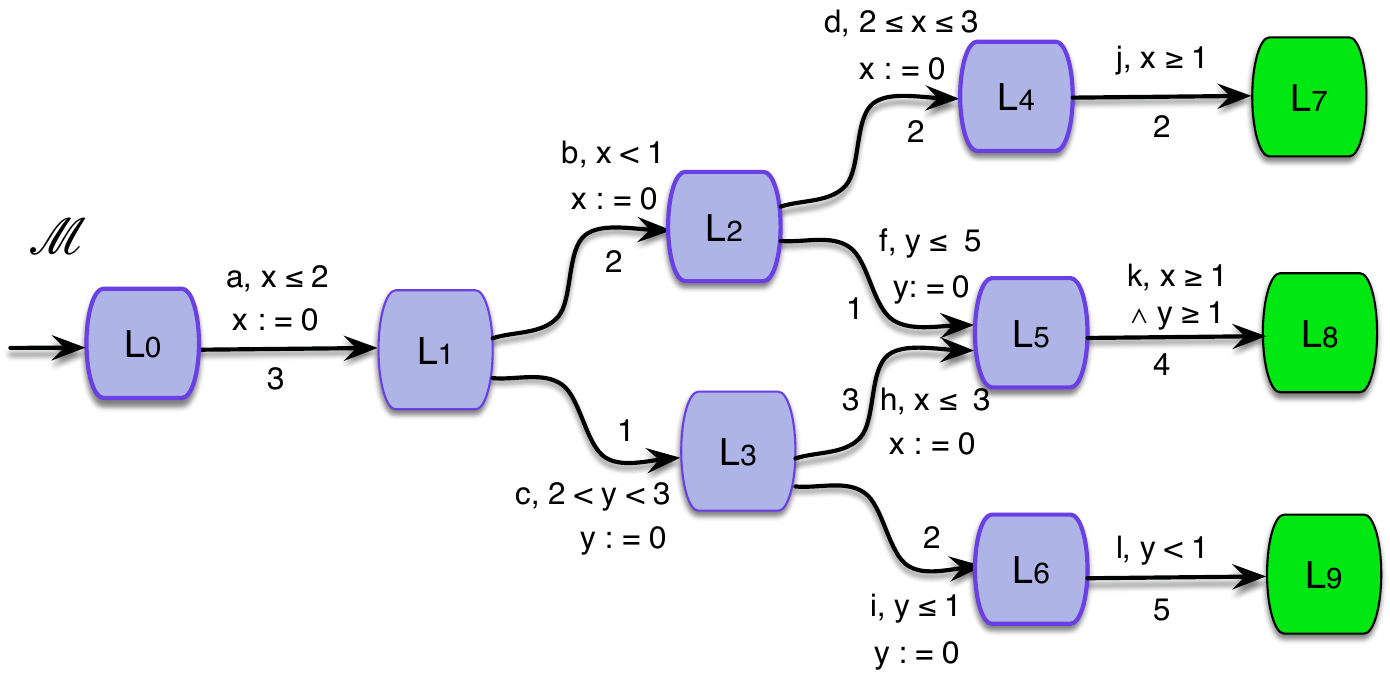}
    \vspace{-4pt}
  \caption{A \abbrv{WTA} with two clocks $x$ and $y$.}
  \label{fig:wta}
\end{figure}
\end{example}

\subsection{Predecessor and Zone Graph}
\label{subsection:graph}

Since the number of states in a \abbrv{WTA} is infinite, thus, it is impossible to build a finite state automaton. Then a symbolic semantics called \abbrv{zone graph} was proposed for a finite representation of \abbrv{TA} behaviors \cite{BYJ04}.  The \abbrv{zone graph} representation of \abbrv{TA} is not only an important implementation approach employed by most contemporary \abbrv{TA} tools \cite{BYJ04}, but it also provides a theoretical foundation for demonstrating the decidability of semantic properties for a given \abbrv{TA}. 

In a \abbrv{zone graph}, clock zones are used to symbolically represent sets of clock valuations. A clock zone $Z$ $\in$ $\mathbb{R}_{\ge 0}^{X}$ over a set of clocks $X$ is a set of valuations which satisfy a conjunction of constraints. Formally, the clock zone for the constraint $\phi$ is $Z$ = $\{ \nu  \ | \ \nu(x) \models \phi, x \in X \}$. Geometrically, a zone is a convex polyhedron. A symbolic state (or zone) is a pair $\mathcal{Z}$ = $(l, Z)$, where $l$ is a location and $Z$ is a clock zone. A zone $\mathcal{Z}$ = $(l, Z)$ represents all the states $z = (l', \nu)$ $\in$ $\mathcal{Z}$ if $l$ = $l'$ and $\nu$ $\in$ $Z$, indicating that a state is contained in a zone. Zones and their representation by \abbrv{Difference Bound Matrices} (\abbrv{DBM}s in short) are the standard symbolic data structure used in tools implementing real-time systems \cite{BYJ04}. We can now define the symbolic discrete  and delay predecessor operations on zones as follows:

\begin{definition}[Discrete and Time Predecessor] Let $\mathcal{Z}$ be a zone and $e$ be an edge of a \abbrv{WTS}($\mathcal{A}$), then: $$\abbrv{disc-pred}(e, \mathcal{Z}) = \{z \ | \   \exists z' \in \mathcal{Z}, z \xrightarrow[w]{e} z'\}$$
$$\abbrv{time-pred}(\mathcal{Z}) = \{z \ | \   \exists z' \in \mathcal{Z}, z \xrightarrow[w]{d} z' \ and \ d \in \mathbb{R}_{\ge 0} \}$$
\end{definition}

That is, $\abbrv{disc}$-$\abbrv{pred}(t, \mathcal{Z})$ is the set of all $e$-predecessors of states in $\mathcal{Z}$ and $\abbrv{time}$-$\abbrv{pred}(\mathcal{Z})$ is the set of all time-predecessors of states in $\mathcal{Z}$. According to these definitions, if $\mathcal{Z}$ is a zone then $\abbrv{time}$-$\abbrv{pred}(\mathcal{Z})$ and $\abbrv{disc}$-$\abbrv{pred}(e, \mathcal{Z})$ are also zones, meaning that zones are preserved by the above predecessor operations.




\begin{definition}[Predecessor] Let $\mathcal{Z}$ be a zone and $e$ be a edge of a \abbrv{WTS}($\mathcal{A}$), then : $$\abbrv{pred}(t, \mathcal{Z}) = \abbrv{disc-pred}(e, \abbrv{time-pred}(\mathcal{Z}))$$
\end{definition}

That is, $\abbrv{pred}$ is the set of all states that can reach some state in $\mathcal{Z}$ by performing a $e$ transition and allowing some time to pass.


\subsection{Obstruction Logic}\label{sec:logic}

In this section, we recall the syntax of obstruction logic. 

\begin{definition}
Let $\Ap$ be a set of atomic formulas (or atoms). Formulas of Obstruction Logic (\abbrv{OL}) are defined by the following grammar:
$$\varphi::=   \top \mid p \mid \neg \varphi\mid  \varphi \land \varphi  \mid \naww{\sabotage_n} \next \varphi \mid  \naww{\sabotage_n} (\varphi \until \varphi) \mid \naww{\sabotage_n}( \varphi \release  \varphi  )$$
where $p$ is an atomic formula and $n$ is any number in $\mathbb{N}_{\ge 0}$.
\end{definition}

The number $n$ is called \emph{the grade} of the strategic operator.  The boolean connectives $\bot$, $\vee$ and $\to$ can be defined as usual. The meaning of a formula $\naww{\sabotage} \varphi $ with $\varphi$ temporal formula is: there is a demonic strategy (that is, a strategy for disabling arcs) such that all paths of the graphs that are compatible with the strategy satisfy $\varphi$.  

\section{Timed Obstruction Logic}
\label{sec:tol}

In this section, we define the syntax and semantics of our Timed Obstruction Logic (\abbrv{TOL}). Our definitions are based on \cite{CLM23,CattaLeneutreMalvone,CattaLM23}.

\begin{definition}
Let $\mathcal{A}$ be a \abbrv{WTA}, $\Ap$ a set of atomic propositions (or atoms),  a set  $X$ of clocks of $\mathcal{A}$ and $J$ a non-empty set of clocks of the formula, where  $X$ $\cap$ $J$ = $\emptyset$. Formulas of Timed Obstruction Logic (\abbrv{TOL}) are defined by the following grammar:
$$\varphi::=   \top \mid p \mid \neg \varphi \mid  \varphi_1 \land \varphi_2  \mid  \phi \mid  \naww{\sabotage_n} (\varphi_1 \until \ \varphi_2) \mid \naww{\sabotage_n}( \varphi_1 \release \ \varphi_2) \mid j.\varphi$$ where $p$ $\in$ $\Ap$ is an atomic formula, $j$ $\in$ $J$, $n$ $\in$ $\mathbb{N}_{\ge 0}$ represents  \emph{the grade} of the strategic operator, and $\phi$ $\in$ $\Phi(X \cup J)$. 
\end{definition}


It is possible to compare a formula clock and an automata clock, for example, by using the clock constraint $\phi$, which applies to both formula clocks and clocks of the \abbrv{TA}.
The boolean connectives $\bot$, $\vee$ and, $\to$ can be defined as usual. Clock $j$ in $j.\varphi$ is called a freeze identifier and bounds the formula clock $j$ in $\varphi$. The interpretation is that $j.\varphi$ is valid in a state $s$ if $\varphi$ holds in $s$ where clock $j$ starts with value 0 in $s$. This freeze identifier can be used in conjunction with temporal constructs to indicate common timeliness requirements such as punctuality, bounded response, and so on. As \abbrv{OL}, we define $\naww{\sabotage_n}\eventualy \varphi:= \naww{\sabotage_n}(\top \until \varphi)$, $\naww{\sabotage_n} \globaly \varphi:= \naww{\sabotage_n} ( \bot \release  \varphi) $ and $\naww{\sabotage_n} (\varphi\, \mathsf{W} \, \psi):= \naww{\sabotage_n} (\psi \release (\varphi \vee \psi))$.  The size $|\varphi|$ of a formula $\varphi$ is the number of its connectives. With the help of the freeze identifier operator of $\abbrv{TOL}$, a time constraint can be added concisely. For instance, the formula $j.\naww{\sabotage_n}((\varphi_1 \wedge j \le 7) \until \varphi_2)$ intuitively means that there is a demonic strategy such that all paths that are compatible with the strategy, the property $\varphi_1$ holds continuously until within 7 time units $\varphi_2$ becomes valid. 
From the above formula, it is clear that timing constraints are allowed. In this case, we will call the formulas with timing constraints, such as \abbrv{timed temporal formulas}. The intuitive meaning of a formula $\naww{\sabotage} \varphi $ with $\varphi$ timed temporal formula is: there is a demonic strategy such that all paths of the \abbrv{WTS} that are compatible with the strategy satisfy $\varphi$.  Formulas of \abbrv{TOL} will be interpreted over \abbrv{WTS}. We can now precisely define the semantics of \abbrv{TOL} formulas.

\begin{definition}[\abbrv{TOL} Semantics]\label{def:sat}
Let $\mathcal{A}$ = $(L, l_0, X, \Sigma, T, I, W,  K,  F)$  be a \abbrv{WTA}, $p$ $\in$ $AP$, $\phi$ $\in$ $\Phi(X)$, $\mathcal{M}$ = \abbrv{WTS}($\mathcal{A}$). The satisfaction relation between a \abbrv{WTS} $\mathcal{M}$, a state $s$ = $(l, \nu)$ of $\mathcal{M}$, and \abbrv{TOL} formulas $\varphi$ and $\psi$ is given inductively as follows:
\begin{itemize}
    \item $\mathcal{M}, s\models \top$ for all state $s$,
    \item  $\mathcal{M}, s\models p$ iff $p\in K(s)$,
    \item  $\mathcal{M},s\models \neg \varphi$ iff not $\mathcal{M}, s\models \varphi$ (notation $\mathcal{M},s\not \models \varphi$), 
    \item  $\mathcal{M}, s \models \varphi_1$ $\wedge$ $ \varphi_2$ iff $\mathcal{M}, s\models \varphi_1$ and $\mathcal{M}, s\models \varphi_2$, 
    \item  $\mathcal{M}, s \models \phi$ iff  $\nu \models \phi$,
    \item  $\mathcal{M}, s\models \naww{\sabotage_n} (\varphi \until  \psi)$ iff there is a $n$-strategy $\strat$ such that for all $\rho\in Out(s,\strat)$ there is a $j\in \mathbb{N}$ such that $\mathcal{M}, \rho_j\models \psi $ and for all $0 \leq k < j$, $\mathcal{M},\rho_k\models \varphi$, 

    \item  $\mathcal{M}, s\models \naww{\sabotage_n} (\varphi \release \psi) $ iff there is a $n$-strategy $\strat$ such that for all $\rho\in Out(s,\strat)$ we have that either $\mathcal{M},\rho_i \models \psi$ for all $i\in \mathbb{N}$ or there is a $k\in \mathbb{N}$ such that $\mathcal{M},\rho_k \models \varphi$ and $\mathcal{M}, \rho_i \models \psi$ for all $0\leq i \leq k$, 

     \item  $\mathcal{M}, s \models j.\varphi$ iff  $\mathcal{M}, (l, \nu[j \gets 0]) \models \varphi$.
    \end{itemize}
    \end{definition}

    Two formulas $\varphi$ and $\psi$ are semantically equivalent (denoted by $\varphi \equiv \psi$) iff for any model $\mathcal{M}$ and state $s$ of $\mathcal{M}$,  $\mathcal{M}, s \models \varphi$ iff $\mathcal{M}, s \models \psi$.   Let $\varphi$ be any formula, $X$ a set of clocks and $\mathcal{A}$ be a \abbrv{WTA}, then   \abbrv{Sat}$({\varphi})$ denotes the set of states of $\mathcal{M}$ = $\abbrv{WTS}(\mathcal{A})$  verifying, $\varphi$, i.e., \abbrv{Sat}$(\varphi)=\set{s \in S \ \, | \, \ \mathcal{M}, s \models \varphi}$. 
    Next, we can establish the relationship between  \emph{local} model checking and \emph{global} model checking. 
 \begin{definition}
    Given $\mathcal{M}$, $s$, and $\varphi$, the \emph{local} model checking concerns determining whether $\mathcal{M}, s \models \varphi$. Given $\mathcal{M}$ and $\varphi$, the \emph{global} model checking concerns determining the set  \abbrv{Sat}$(\varphi)$.
\end{definition}

 \begin{definition}
	A state $s$ in a \abbrv{WTS}$(\mathcal{A})$ satisfies a formula $\varphi$ iff $(s, \mu_{0})$ $\models$ $\varphi$ where $\mu_{0}$ is the clock valuation that maps each formula clock to zero.
\end{definition}	

The relationship between \abbrv{WTA} and \abbrv{WTS} is defined as follows.

\begin{definition}
	Let $\mathcal{A}$ be a \abbrv{WTA} and  $\varphi$ $\in$ $\textsf{TOL}$, then $\mathcal{A}$ $\models$ $\varphi$ iff $\textsf{WTS}(\mathcal{A})$ $\models$ $\varphi$. 
\end{definition}

Let $\varphi$ be a formula, then the set of extended states satisfying $\varphi$ is independent of the valuation $\mu$ for the formula clocks. Thus, if $\varphi$ is a formula, then for any state $s$ = $(l, \nu)$ in a \abbrv{WTS} and valuations $\mu$, $\mu'$ for the formula clocks, we can get that  $\mathcal{M}, (l,\mu) \models$ $\varphi$ iff $\mathcal{M}, (l,\mu') \models$ $\varphi$. Therefore, it makes sense to talk about a state $s$ that satisfies $\varphi$. 

\section{Model Checking}\label{sec:mc}

Here, we present our model checking algorithm for \abbrv{TOL}. Furthermore, we show that the model checking problem for \abbrv{TOL} is decidable in \abbrv{PSPACE}-complete. The general structure of the algorithm shown here is similar to \abbrv{OL} model checking algorithm \cite{CLM23}.  \abbrv{TOL} model checking algorithm is based on the computation of the set \abbrv{Sat}$(\varphi)$ of all states satisfying a \abbrv{TOL} formula $\varphi$, followed by checking whether the initial state is included in this set.
A \abbrv{WTA} $\mathcal{A}$ satisfies \abbrv{TOL} state formula $\varphi$ if and only if $\varphi$ holds in the initial state of \abbrv{WTA}:
$\mathcal{A} \models \varphi$ if  and  only if $l_0 \in L$ such that, $(l_0, \nu_0)$ $\in$ \abbrv{Sat}($\varphi$), where $\nu_0(x) = 0$ for all $x \in X$. 
In our \abbrv{TOL}, a \abbrv{TOL} formula $\naww{\sabotage_n} ((\varphi \until \psi)$  holds in a state $s$ iff $\sabotage_n\abbrv{U}(\abbrv{Sat}(\psi_1), \abbrv{Sat}(\psi_2))$  with $\until$ being an \abbrv{TOL} operator. 
As mentioned in subsection~\ref{subsection:graph}, build a \abbrv{WTS}($\mathcal{A}$) for some \abbrv{WTA} $\mathcal{A}$ is therefore not practicable. Instead, the basic idea is to construct a zone graph \cite{BYJ04}, which is built from the \abbrv{WTA} $\mathcal{A}$ and the \abbrv{TOL} formula $\varphi$ (i.e., \abbrv{ZG}($\mathcal{A}$, $\varphi$)).
In short, Algorithm~\ref{alg:labeling} begins with a \abbrv{WTA} $\mathcal{A}$ and a formula $\varphi$ used to construct the zone graph \abbrv{ZG}($\mathcal{A}$, $\varphi$) and returns the set of states of $\mathcal{A}$ satisfying $\varphi$.  The Algorithm~\ref{alg:labeling} works as follows: it first constructs the zone graph \abbrv{ZG}($\mathcal{A}$, $\varphi$), then it recursively computes, for all subformulas $\psi$, the sets of states \abbrv{Sat}$(\psi)$ for which $\psi$ is satisfied. The computation of \abbrv{Sat}$(\psi)$ for $\psi$ being true, a proposition $p$, or a clock constraint $\phi$ is explicit. The negation and conjunction computations are straightforward.  The computation of the \abbrv{TOL} formula $\naww{\sabotage_n} (\psi_1 \until \psi_2)$ and $\naww{\sabotage_n}(\psi_1 \release \psi_2)$ are defined under the computation of predecessor sets. However, the notion of predecessors is different for the quantifiers in \abbrv{TCTL} \cite{alurD94}. The computation of the \abbrv{TOL} formula $\naww{\sabotage_n} (\psi_1 \until \psi_2)$  can be reduced to the computation of an \abbrv{OL} formula. The computation of $\naww{\sabotage_n} ((\psi_1 \until \ \psi_2)$  is a fixed-point iteration that starts at \abbrv{Sat}$( \psi_2)$ and iteratively adds all predecessor states that are in \abbrv{Sat}$(\psi_1)$.  We need to define a new predecessor operator to compute all predecessors with the obstruction operator. We will now use our  zone graph to compute predecessors.  The predecessor computation is done by the operator  $\blacktriangledown(n, \mathcal{Z})$ for a symbolic state  $\mathcal{Z}$ (zones) and a number $n$, computes the set of all predecessor states (likewise, for the operator $\release$). 

\begin{definition}
  Given a symbolic state  $\mathcal{Z}$ and $e$ an edge, we define $\abbrv{Pred}(\mathcal{Z})$ = $ \bigcup_{e \in  E}$ $ \abbrv{pred}(e, \mathcal{Z})$.   
\end{definition}

Therefore, the obstruction predecessor of a zone $\mathcal{Z}$, denoted $\blacktriangledown(n, \mathcal{Z})$, is defined as the set of symbolic states  that characterizes all predecessors of the symbolic state $\mathcal{Z}$, where each state $z$ satisfies $\blacktriangleright(z, n,\overline{\mathcal{Z}})$, that is, it can transition to a state not in the set $\mathcal{Z}$ where the sum of all successors of $s$ is less than or equal to $n$. 

\begin{definition}
Let $n \in \mathbb{N}$ and $\mathcal{Z}$ a symbolic state, we write:
$$ \blacktriangleright(z, n, \mathcal{Z}) \ = \ (\sum_{z' \ \in \ \mathcal{Z} \ \land \ \sigma \ \in \ \Sigma_\Delta} \mathsf{W}(z, \sigma, z'))\leq n$$   
$$\blacktriangledown(n,\mathcal{Z}) = \set{ z \in \abbrv{Pred}(\mathcal{Z}) \, \ | \,  \blacktriangleright(z, n,\overline{\mathcal{Z}})}$$ 
\end{definition}

\begin{proposition}
    Let  $z$ a state, $n$ a natural number, and $\mathcal{Z}$ a symbolic state (or zone), then $z$ $\in$ $\blacktriangledown(n,\mathcal{Z})$ iff $z$ $\in$ $\mathcal{Z}$.
\end{proposition}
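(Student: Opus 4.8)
The plan is to prove the two implications separately, after unfolding the definitions of $\blacktriangledown$, $\abbrv{Pred}$, $\abbrv{pred}$ and $\blacktriangleright$. I would first rewrite $z\in\blacktriangledown(n,\mathcal{Z})$ as the conjunction of two facts: (i) $z\in\abbrv{Pred}(\mathcal{Z})$, i.e. there are an edge $e\in E$ and a delay $d\in\mathbb{R}_{\ge 0}$ with $z\xrightarrow{d} z''\xrightarrow[w]{e} z'$ for some $z'\in\mathcal{Z}$; and (ii) $\blacktriangleright(z,n,\overline{\mathcal{Z}})$, i.e. $\sum_{z'\in\overline{\mathcal{Z}},\,\sigma\in\Sigma_\Delta}\mathsf{W}(z,\sigma,z')\le n$. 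Because zones are closed under $\abbrv{disc-pred}$ and $\abbrv{time-pred}$ (Section~\ref{subsection:graph}), hence under $\abbrv{pred}$, $\abbrv{Pred}$ and $\blacktriangledown(n,\cdot)$, the whole argument can be carried out symbolically on the zone graph, keeping all the quantities in (i)--(ii) finite.

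For the direction $z\in\blacktriangledown(n,\mathcal{Z})\Rightarrow z\in\mathcal{Z}$, I would partition the outgoing transitions of $z$ according to whether their target lies in $\mathcal{Z}$ or in $\overline{\mathcal{Z}}$, keeping in mind that the delay transitions carry no deactivation cost and cannot be obstructed. Condition (ii) says precisely that the bundle of discrete edges leaving $\mathcal{Z}$ fits within the budget $n$, while (i) exhibits a surviving transition back into $\mathcal{Z}$; a short case analysis on this partition then pins $z$ down. I would isolate this counting step as a self-contained claim, since the same bookkeeping reappears in the correctness proof of the $\until$/$\release$ fixpoints.

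For the converse $z\in\mathcal{Z}\Rightarrow z\in\blacktriangledown(n,\mathcal{Z})$, I would supply the witness demanded by $\abbrv{Pred}(\mathcal{Z})$ from the delay-$0$ self step available at every symbolic state together with the seriality/self-loop convention on the weighted arena that is standard in obstruction games, and then check that the sum over transitions of $z$ leaving $\mathcal{Z}$ is either the empty sum, hence trivially $\le n$, or bounded by $n$ by the standing well-formedness hypothesis on the grade and the weights. This is the step I expect to be the main obstacle: it is where the continuous component genuinely interferes with the discrete weighted one, so one has to keep the delay steps cleanly apart from the discrete edges when evaluating both $\abbrv{Pred}$ and $\blacktriangleright$, and one has to make the seriality assumption explicit so that every state of $\mathcal{Z}$ is certified to be a $\abbrv{Pred}$ of $\mathcal{Z}$. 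I would therefore prove a small auxiliary lemma describing $\abbrv{Pred}(\mathcal{Z})$ and $\overline{\mathcal{Z}}$ at the zone level before assembling the two inclusions.
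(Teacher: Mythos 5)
Your proposal is considerably more detailed than what the paper offers --- the paper's entire proof is a one-line sketch deferring to ``the convexity of zones'' --- but neither of your two directions actually closes, and the difficulty is not one of bookkeeping. For the implication $z\in\blacktriangledown(n,\mathcal{Z})\Rightarrow z\in\mathcal{Z}$, unfolding the definitions gives you only that $z$ is a (time-then-discrete) predecessor of some state of $\mathcal{Z}$ and that the total weight of its transitions into $\overline{\mathcal{Z}}$ is at most $n$; nothing in this ``pins $z$ down'' inside $\mathcal{Z}$. A state sitting at a different location with a single edge into $\mathcal{Z}$ and no (or cheap) escaping edges satisfies both conditions while lying outside $\mathcal{Z}$, so the ``short case analysis'' you postpone cannot exist. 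For the converse, you yourself flag the real problem: to put an arbitrary $z\in\mathcal{Z}$ into $\abbrv{Pred}(\mathcal{Z})$ you must exhibit a delay step \emph{composed with a discrete edge back into $\mathcal{Z}$}, i.e.\ you need a seriality or self-loop convention. No such convention appears in the paper's definitions of \abbrv{WTA}, \abbrv{WTS}, or the predecessor operators ($\abbrv{pred}$ is explicitly $\abbrv{disc}$-$\abbrv{pred}$ applied after $\abbrv{time}$-$\abbrv{pred}$, so a pure delay-$0$ self-step does not by itself witness membership). Importing it as a ``standing well-formedness hypothesis'' changes the model class rather than proving the proposition.

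The honest conclusion is that the biconditional, read literally with the paper's definitions, is not derivable, and your plan founders exactly where it should. The paper's own justification does not rescue it: the appeal to convexity of zones is really the content of the \emph{following} proposition (that $\blacktriangledown(n,\cdot)$ preserves convexity) and has no bearing on the set equality $\blacktriangledown(n,\mathcal{Z})=\mathcal{Z}$. What the development actually needs downstream, in the correctness of the $\until$/$\release$ fixpoint computations, is that $\blacktriangledown(n,\cdot)$ maps zones to zones and is monotone --- precisely what your proposed auxiliary lemma describing $\abbrv{Pred}(\mathcal{Z})$ and $\overline{\mathcal{Z}}$ at the zone level would establish. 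I would redirect your effort there rather than trying to force either inclusion of the stated equivalence.
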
 
\begin{proof}(Sketch). A proof of this proposition may be obtained from the convexity of zones 
on \abbrv{TA} \cite{alurD94, TY01}. 
\end{proof}
  
\begin{proposition}
    Let  $z$ a state, $n$ a natural number, and $\mathcal{Z}$ a symbolic state, then if $\mathcal{Z}$ is convex, then $\blacktriangledown(n,\mathcal{Z})$ is also convex.
\end{proposition}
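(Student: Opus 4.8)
The plan is to unfold the definition of $\blacktriangledown$ and track which closure properties of zones are used. By definition, $\blacktriangledown(n,\mathcal{Z}) = \abbrv{Pred}(\mathcal{Z}) \cap \set{z \mid \blacktriangleright(z,n,\overline{\mathcal{Z}})}$ with $\abbrv{Pred}(\mathcal{Z}) = \bigcup_{e \in E} \abbrv{disc-pred}(e,\abbrv{time-pred}(\mathcal{Z}))$; since an intersection of convex sets is convex, it suffices to control the two conjuncts separately.

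For the first conjunct I would invoke the classical facts about timed automata \cite{alurD94,TY01}: the time-predecessor of a zone is a zone, and for a single edge $e = l \xrightarrow{a,\phi,Y} l'$ the discrete predecessor is obtained by intersecting with $I(l')$, ``unresetting'' the clocks in $Y$ (the operation $Z \mapsto \set{\nu \mid \nu[Y\gets 0] \in Z}$, which preserves zones), and then intersecting with the guard $\phi$ and with $I(l)$. All of these preserve convexity and intersections of convex sets are convex, so each $\abbrv{pred}(e,\mathcal{Z})$ is convex; moreover, since a symbolic state is attached to a single location, only the finitely many edges entering that location contribute to $\abbrv{Pred}(\mathcal{Z})$.

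For the second conjunct, note that for $z = (l,\nu)$ the sum in $\blacktriangleright(z,n,\overline{\mathcal{Z}})$ ranges over exactly those out-edges of $l$ whose guard holds at $\nu$ and whose reset sends $\nu$ outside $\mathcal{Z}$ (delays carry no weight). Fixing which sub-collection of out-edges is simultaneously active turns that sum into a constant natural number, so on the region of valuations realizing a given activation pattern the condition $\blacktriangleright$ holds identically or fails identically; each such region is itself a Boolean combination of the guards and of the zones $\set{\nu \mid \nu[Y\gets 0] \in Z}$. Putting the pieces together, $\blacktriangledown(n,\mathcal{Z})$ is a finite Boolean combination of zones, and one finishes by appealing to the convexity lemma for timed-automata zones (and to the fact that the zone graph built for $(\mathcal{A},\varphi)$ keeps these sets per location) to conclude that the particular combination arising here is a single convex zone.

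The step I expect to be the main obstacle is precisely this last one: the finite union over edges in $\abbrv{Pred}$ and the case split over active edges hidden in $\blacktriangleright$ involve unions and complements of zones, which are not convex in general, so the argument must lean on the structural convexity lemma of \cite{alurD94,TY01} to show that the combination occurring here does not actually leave the class of convex zones — or, more modestly, the statement should be read with respect to the federation of convex zones that the labeling algorithm genuinely manipulates. Everything else reduces to routine checks of standard zone closure properties.
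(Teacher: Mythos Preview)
Your approach coincides with the paper's: both reduce the claim to the standard convexity results for zones in timed automata and cite \cite{alurD94,TY01}. The paper's own proof is in fact only the single sentence ``A proof of this proposition may be obtained from the convexity of zones on \abbrv{TA} \cite{alurD94, TY01}'', so you have actually unpacked considerably more than the authors do.

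In doing so you put your finger on something the paper does not address. The definition $\blacktriangledown(n,\mathcal{Z}) = \set{z\in\abbrv{Pred}(\mathcal{Z}) \mid \blacktriangleright(z,n,\overline{\mathcal{Z}})}$ involves a union over edges in $\abbrv{Pred}$ and a complement $\overline{\mathcal{Z}}$ inside $\blacktriangleright$, and neither operation preserves convexity of zones in general. Your hedge --- that the statement is perhaps best read as closure within the class of finite unions of zones (``federations'') rather than single convex zones --- is exactly the honest reading; the cited references support closure of zones under time-predecessor, discrete predecessor along a fixed edge, intersection, and reset/unreset, but not under arbitrary union or complement. So your sketch is at least as complete as the paper's, and your caveat about the last step is warranted rather than a defect of your argument.
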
 
\begin{proof}
      (Sketch). A proof of this proposition may be obtained from the convexity of zones 
on \abbrv{TA} \cite{alurD94, TY01}. 
\end{proof}
  
	\begin{algorithm}[ht]
		\caption{\abbrv{TOL} model checking  \\ \abbrv{Input:} A \abbrv{ZG}($\mathcal{A}$,  $\varphi$) where  $\mathcal{A}$ is a \abbrv{WTA} and  $\varphi$  is a \abbrv{TOL} formula\\
         \abbrv{Output:} \abbrv{Sat}$(\phi)$ $\gets$ $\{s \in S \ | \ s \models  \varphi \}$ }
		   \label{alg:labeling}
		    \begin{algorithmic}[1]
            \ForAll{ i $\le$  $|\varphi|$ }
             \ForAll{$\psi$  $\in Sub(\varphi) \ with \ |\psi| = i$}
			\Switch{$(\psi)$}
			\Case{$\psi=\top$}
			\State{$ \abbrv{Sat}(\psi) \gets S$}
			\EndCase
			\Case{$\psi=p$}
			\State{$  \abbrv{Sat}(\psi) \gets  \set{s\in S \, | \,  p\in K(s)}$}
			\EndCase
			\Case{$\psi=\neg \psi_1$}
			\State{$\abbrv{Sat}(\psi) \gets S \setminus \abbrv{Sat}(\psi) $}
			\EndCase
             \Case{$\psi= \phi$}
			\State{$\abbrv{Sat}(\psi) \gets \phi $}
			\EndCase
			\Case{$\psi=\psi_1 \land \psi_2$}
			\State{$\abbrv{Sat}(\psi) \gets \abbrv{Sat}(\psi_1)\cap \abbrv{Sat}(\psi_2)$}
			\EndCase
			\Case{$\psi=\naww{\sabotage_n} (\psi_1 \untilJ \psi_2)$}
			\State{$\abbrv{Sat}(\psi) \gets {\sabotage_n}\abbrv{U}(\abbrv{Sat}(\psi_1), \abbrv{Sat}(\psi_2))$}
			\EndCase
            \Case{$\psi=\naww{\sabotage_n}(\psi_1\releaseJ \psi_2)$}
			\State{$\abbrv{Sat}(\psi) \gets \sabotage_n\abbrv{R}(\abbrv{Sat}(\psi_1), \abbrv{Sat}(\psi_2))$}
			\EndCase
            \Case{$\psi=j.\psi_1$}
			\State{$\abbrv{Sat}(\psi) \gets \abbrv{Sat}(\psi_1)$}
			\EndCase 
			\EndSwitch
			\EndFor
          \EndFor
		\end{algorithmic}
	\end{algorithm}
For the $\until$ and $\release$ operators, auxiliary methods are defined. These methods are listed in Algorithm~\ref{alg:labeling1} and Algorithm~\ref{alg:labeling2}.  Algorithm~\ref{alg:labeling1} shows the backward search for computing the method ${\sabotage_n}\abbrv{U}(\abbrv{Sat}(\psi_1), \abbrv{Sat}(\psi_2))$  in line $15$ of Algorithm~\ref{alg:labeling}. 
\begin{algorithm}
		\caption{Backward search for computing  ${\sabotage_n}\abbrv{U}$  \\ \abbrv{Input:} A  \abbrv{TOL} formula $\naww{\sabotage_n} (\psi_1 \until \psi_2)$ \\
         \abbrv{Output:} \abbrv{Sat}$(\naww{\sabotage_n} (\psi_1 \until \psi_2))$ $\gets$ $\{s \in S \ | \ s \models  \naww{\sabotage_n} (\psi_1 \until \psi_2) \}$ }
		   \label{alg:labeling1}
		    \begin{algorithmic}[1]
			\State{$ X \gets \emptyset$}
            \State{$ Y \gets \psi_2$}
		    \While{$Y \neq X$}
   			\State{$X \gets Y$}
            \State{$Y \gets \psi_2 \cup (\psi_1 \cap \blacktriangledown (n,X))$ } 
			\EndWhile
		    \State{$ \mathbf{return} \  Y$}

		\end{algorithmic}
	\end{algorithm}
Algorithm~\ref{alg:labeling2} shows the backward search for computing the method ${\sabotage_n}\abbrv{R}(\abbrv{Sat}(\psi), \abbrv{Sat}(\psi_2))$  in line $17$ of Algorithm~\ref{alg:labeling}. 
\begin{algorithm}
		\caption{Backward search for computing   ${\sabotage_n}\abbrv{R}$  \\ \abbrv{Input:} A  formula $\naww{\sabotage_n} (\psi_1 \release \psi_2)$ \\
         \abbrv{Output:} \abbrv{Sat}$(\naww{\sabotage_n} (\psi_1 \release \psi_2))$ $\gets$ $\{s \in S \ | \ s \models  \naww{\sabotage_n} (\psi_1 \release \psi_2) \}$ }
		   \label{alg:labeling2}
		    \begin{algorithmic}[1]
			\State{$ X \gets \top$}
            \State{$ Y \gets \psi_2$}
		    \While{$Y \neq X$}
   			\State{$X \gets Y$}
            \State{$Y \gets \psi_2 \cap (\psi_1 \cup \blacktriangledown (n,X))$ } 
			\EndWhile
		    \State{$ \mathbf{return} \  Y$}

		\end{algorithmic}
	\end{algorithm}
Termination of the Algorithm~\ref{alg:labeling} intuitively follows, as the number of states in the zone graph is finite. The following proposition establishes the termination and the correctness of our model checking algorithm.

\begin{proposition}[Termination]
    Let $\mathcal{A}$ be a \abbrv{WTA} and $\varphi$ be a formula. Algorithm~\ref{alg:labeling} always terminates on input \abbrv{ZG}($\mathcal{A},\varphi$). 
\end{proposition}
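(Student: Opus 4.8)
The plan is to reduce termination to two observations: the zone graph $\abbrv{ZG}(\mathcal{A},\varphi)$ is finite, and within each \textbf{switch}-branch of Algorithm~\ref{alg:labeling} only finitely many operations are performed over a finite lattice of zone-representable state sets. First I would note that the two outer \textbf{for} loops range over finite index sets: $i$ ranges over $\{1,\dots,|\varphi|\}$, and for each $i$ the inner loop ranges over the finitely many $\psi \in Sub(\varphi)$ with $|\psi|=i$ (and $|Sub(\varphi)| \le |\varphi|$). Since subformulas are processed in increasing order of size, whenever a case for $\psi$ is reached, the sets $\abbrv{Sat}(\psi_1),\abbrv{Sat}(\psi_2)$ for its proper subformulas are already available. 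Hence it suffices to show that each individual \textbf{case} terminates.

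Next I would invoke the standard finiteness of the zone graph~\cite{BYJ04}: since every clock constant occurring in a guard or invariant of $\mathcal{A}$, or in a constraint $\phi \in \Phi(X \cup J)$ appearing in $\varphi$, is bounded, $\abbrv{ZG}(\mathcal{A},\varphi)$ has only finitely many symbolic states; and because $\abbrv{disc\text{-}pred}$, $\abbrv{time\text{-}pred}$ (and hence $\abbrv{Pred}$ and $\blacktriangledown(n,\cdot)$) map zones to zones, all state sets manipulated by the algorithm are finite unions of zones of $\abbrv{ZG}(\mathcal{A},\varphi)$, forming a \emph{finite} lattice under inclusion. The cases $\psi = \top$, $\psi = p$, $\psi = \neg\psi_1$, $\psi = \phi$, $\psi = \psi_1 \wedge \psi_2$, and $\psi = j.\psi_1$ then terminate immediately, as each is a constant number of set operations (lookup, complement, intersection, or the explicit computation of the zone denoted by a clock constraint) over this finite lattice.

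It remains to treat $\psi = \naww{\sabotage_n}(\psi_1 \until \psi_2)$ and $\psi = \naww{\sabotage_n}(\psi_1 \release \psi_2)$, which delegate to the fixpoint loops of Algorithms~\ref{alg:labeling1} and~\ref{alg:labeling2}. Here I would first establish that $\blacktriangledown(n,\cdot)$ is monotone for inclusion: if $\mathcal{Z} \subseteq \mathcal{Z}'$ then $\overline{\mathcal{Z}'} \subseteq \overline{\mathcal{Z}}$, so the weight sum in $\blacktriangleright(z,n,\overline{\cdot})$ can only decrease (weights are non-negative) and the side condition is preserved, while $\abbrv{Pred}$ is monotone; hence the updates $A \mapsto \abbrv{Sat}(\psi_2) \cup (\abbrv{Sat}(\psi_1) \cap \blacktriangledown(n,A))$ and $A \mapsto \abbrv{Sat}(\psi_2) \cap (\abbrv{Sat}(\psi_1) \cup \blacktriangledown(n,A))$ are both monotone. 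In Algorithm~\ref{alg:labeling1} the update always contains $\abbrv{Sat}(\psi_2)$ and the iteration starts from $Y = \abbrv{Sat}(\psi_2)$, so the successive values of $Y$ form a non-decreasing chain; in Algorithm~\ref{alg:labeling2} the update is always contained in $\abbrv{Sat}(\psi_2)$, so after the first pass (which sets $X = \abbrv{Sat}(\psi_2)$) monotonicity forces a non-increasing chain. A monotone chain in a finite lattice stabilises after finitely many steps, at which point the guard $Y \neq X$ fails and the loop exits; and each iteration does a bounded number of zone operations. Therefore every \textbf{case} terminates, both outer loops are finite, and Algorithm~\ref{alg:labeling} halts. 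The one point I would flag as requiring care is precisely the step that keeps the iterated sets inside a \emph{finite} sublattice of zone-representable sets — this is where the finiteness of $\abbrv{ZG}(\mathcal{A},\varphi)$ and the zone-preservation of the predecessor operators are essential — rather than any intricate calculation.
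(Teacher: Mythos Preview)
Your proposal is correct and follows essentially the same approach as the paper's own proof: both rest on the finiteness of the zone graph and the boundedness of the subformula iteration, so that every loop ranges over a finite set. Your version is considerably more detailed than the paper's sketch---in particular, you spell out the monotonicity of $\blacktriangledown(n,\cdot)$ and the resulting monotone-chain argument for the \textsf{U}/\textsf{R} fixpoint loops, which the paper leaves implicit under ``the number of iterations is limited by finite sets.''
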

\begin{proof} (Sketch).
  The Algorithm~\ref{alg:labeling} computes the zone graph $\mathcal{G}$ in a finite time. The computation of subformulas $\abbrv{Sub}(\psi)$ and the updating of the labeling function $K$ are also bounded, as the number of iterations is limited by finite sets. Since the  Algorithm~\ref{alg:labeling} terminates. 
 \end{proof}
\begin{proposition}[Soundness and Completeness]
Let $\mathcal{A}$ be a \abbrv{WTA}, $\varphi$ be a\abbrv{TOL} formula and  $\mathcal{M}$ = \abbrv{WTS}($\mathcal{A}$) be a \abbrv{WTS}. Assume the model checking (Algorithm~\ref{alg:labeling}) is sound and complete. Then, there exists  $s$ $\in$ $S$, such that, iff, $\mathcal{M}, s$ $\models$ $\varphi$. 
    \end{proposition}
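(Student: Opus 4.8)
The statement is a correctness claim for Algorithm~\ref{alg:labeling}, so the plan is to prove the properly quantified version behind it: the set $\abbrv{Sat}(\varphi)$ returned by the algorithm on $\abbrv{ZG}(\mathcal{A},\varphi)$ satisfies, for every state $s$ of $\mathcal{M} = \abbrv{WTS}(\mathcal{A})$, that $s\in\abbrv{Sat}(\varphi)$ if and only if $\mathcal{M},s\models\varphi$. This immediately yields the displayed biconditional and, applied to the initial state, the reading $\mathcal{A}\models\varphi$ iff $(l_0,\nu_0)\in\abbrv{Sat}(\varphi)$. The proof goes by structural induction on $\varphi$, the induction hypothesis being that every case of the algorithm for a strict subformula $\psi\in\abbrv{Sub}(\varphi)$ is correct; since Algorithm~\ref{alg:labeling} processes subformulas in increasing size, this hypothesis is available when $\psi$ is handled.

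First I would dispatch the routine cases. For $\psi=\top$, $\psi=p$ and $\psi=\phi$ (a clock constraint) correctness is immediate from the matching clause of Definition~\ref{def:sat} and from the labelling $K'$ of the WTS. For $\psi=\neg\psi_1$ and $\psi=\psi_1\land\psi_2$ the claim follows from the induction hypothesis together with the fact that complement and intersection of state sets mirror the semantic clauses. For $\psi=j.\psi_1$ one invokes the observation already recorded in Section~\ref{sec:tol} that the set of extended states satisfying a formula does not depend on the valuation of the formula clocks; hence $\abbrv{Sat}(j.\psi_1)=\abbrv{Sat}(\psi_1)$ is correct, as the algorithm sets.

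The core of the argument is the two strategic cases. Here I would first establish a one-round lemma: for any set $A$ of symbolic states, $z\in\blacktriangledown(n,A)$ iff the Demon has a legal move at $z$, i.e.\ a set $E'\subseteq E(z)$ with $\sum_{e\in E'}\mathsf{W}(e)\le n$, such that every transition out of $z$ that is not disabled by $E'$ — every delay transition and every remaining discrete transition — leads into $A$; this is exactly the single round of the Demon/Adversary interaction, and its correctness rests on $\abbrv{pred}$, $\abbrv{time-pred}$ preserving zones (Section~\ref{subsection:graph}) and on the convexity propositions stated above, which are what let a state in a zone delay without leaving it improperly. Writing $U(A)=\abbrv{Sat}(\psi_2)\cup(\abbrv{Sat}(\psi_1)\cap\blacktriangledown(n,A))$ and $R(A)=\abbrv{Sat}(\psi_2)\cap(\abbrv{Sat}(\psi_1)\cup\blacktriangledown(n,A))$, both monotone, Algorithm~\ref{alg:labeling1} (starting from $\emptyset$) computes $\mathrm{lfp}(U)$ and Algorithm~\ref{alg:labeling2} (starting from $S$) computes $\mathrm{gfp}(R)$; termination of both loops follows from finiteness of the zone graph and monotonicity (Knaster--Tarski). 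It then remains to prove $\abbrv{Sat}(\naww{\sabotage_n}(\psi_1\until\psi_2))=\mathrm{lfp}(U)$ and $\abbrv{Sat}(\naww{\sabotage_n}(\psi_1\release\psi_2))=\mathrm{gfp}(R)$. For soundness of the fixpoint I would stratify $\mathrm{lfp}(U)$ as an ascending chain $A_0\subseteq A_1\subseteq\cdots$ and, by the one-round lemma, glue the local Demon moves at each rank into one memoryless $n$-strategy whose compatible paths are forced to realise $\psi_1\until\psi_2$, the rank bounding the number of rounds before $\psi_2$; dually for the co-inductive $\release$ case one checks that from every state of $\mathrm{gfp}(R)$ the fixed local move keeps all compatible paths inside $\abbrv{Sat}(\psi_2)$ until $\abbrv{Sat}(\psi_1)$ is met. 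For the converse inclusions, take a witnessing $n$-strategy $\strat$ and argue that the set of states from which $\strat$ guarantees the objective is itself a (pre-)fixpoint of the corresponding operator, hence is contained in $\mathrm{lfp}(U)$ (resp.\ contains $\mathrm{gfp}(R)$).

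The step I expect to be the main obstacle is precisely this last equivalence between the strategic semantics — which quantifies over history-dependent $n$-strategies $\strat\colon H\to 2^{E}$ over the infinite $\mathcal{M}$ — and a fixpoint computed over the finite zone graph. Two points need genuine care: (i) collapsing history-dependent strategies to memoryless ones on $\abbrv{ZG}(\mathcal{A},\varphi)$, which is sound for these reachability/safety-shaped objectives but must be argued, not assumed; and (ii) soundness of the zone abstraction itself, namely that $\abbrv{ZG}(\mathcal{A},\varphi)$ faithfully represents $\mathcal{M}$ for all subformulas of $\varphi$ — including the frozen formula clocks in $J$ — so that membership in the computed set on the zone graph coincides with satisfaction in the underlying WTS. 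This is the standard region/zone-equivalence argument for timed automata, adapted to the extra clocks and to the obstruction predecessor $\blacktriangledown$, and it is exactly where convexity of zones and their preservation under $\abbrv{pred}$ and $\abbrv{time-pred}$ are used essentially.
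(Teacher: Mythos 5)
Your proposal follows essentially the same route as the paper's own (sketch) proof: structural induction on subformulas, with the strategic cases handled by analysing the fixpoint iteration of the backward-search algorithms via the obstruction predecessor $\blacktriangledown(n,\cdot)$. Your version is in fact more careful than the paper's, which asserts without argument the two points you explicitly flag as needing work (the reduction to memoryless $n$-strategies, and the soundness of the zone-graph abstraction --- the paper even appeals to ``the cardinality of $\mathcal{M}$ is finite,'' which holds only for the zone graph, not for $\abbrv{WTS}(\mathcal{A})$ itself).
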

 \begin{proof} (Sketch).
    In order to prove the theorem, it will be sufficient to show for every subformula $\psi$ of  $\varphi$ and every state  $s$ $\in$ $S$, iff, $\psi$ is true at $s$.
    \noindent ($\textbf{Soundness}$) For every $\psi$ $\in$ \abbrv{Sub}($\varphi$) and $s$ $\in$ $S$, implies $\mathcal{M}, s$ $\models$ $\psi$. We prove this by induction over the structure of $\psi$ as follows. The base cases, $\psi=\top$ and $\psi=p$ ($p$ $\in$ $\Ap$), are obvious. For the induction step, the cases of boolean combinations, $\psi= \neg \psi$ and $\psi=\psi_1 \wedge \psi_2$, of maximal state formulas is trivial. The induction step for the remaining obstruction operators is as follows.
     
     If $\psi$ = ${\sabotage_n}\abbrv{U}(\abbrv{Sat}(\psi_1), \abbrv{Sat}(\psi_2))$.  Let $Y$ be the set of states of $S$ that is returned by algorithm~\ref{alg:labeling1} at line $6$. We need to show that $Y=\abbrv{Sat}(\psi_2)$ provided that $X=\abbrv{Sat}({\psi_1})$. We first show that $\abbrv{Sat}({\psi}) \subseteq Y$. Suppose that $s\in \abbrv{Sat}(\psi)$. By the definition of satisfaction, this means that there is a strategy $\strat$ such that given any $\rho=s_1,s_2,\ldots $ in $Out(s,\strat)$. Note that since the cardinality of $\mathcal{M}$ is finite, and we can suppose that $\strat$ is memoryless, we can focus on the finite prefix $s_1,\ldots s_m$ of $\rho$ in which all the $s_i$ are distinct. Let $A_i$ ( for $i< |{\mathcal{M}}|)$ be the value of the variable $A$ before the first $i$-th iteration of the algorithm. We show that if $C\subseteq A_i$ then $C\subseteq A_{i+1}$. First of all, note that $A_i\subseteq \abbrv{Sat}(\psi_1)$ for all $i$.  By definition,  we have that $A_{i+1}=\blacktriangledown(n,A_i)\cap \abbrv{Sat}(\psi_1)$, i.e., $A_{i+1}$ is computed by taking all the element of $\abbrv{Sat}(\psi)$ that have at most $n$ successors that are not in $A_i$. Hence,  
      $\mathcal{M}, s\models \psi$. If $\psi$ = $\sabotage_n\abbrv{R}(\abbrv{Sat}(\psi_1), \abbrv{Sat}(\psi_2))$ then the proof is similar to the above case. 

      \noindent ($\textbf{Completeness}$) For every $\psi$ $\in$ \abbrv{Sub}($\varphi$) and $s$ $\in$ $S$, implies $\mathcal{M}, s$ $\not\models$ $\psi$ as follows. We prove this over the structure of $\psi$. The base cases, $\psi=\top$ and $\psi=p$ ($p$ $\in$ $\Ap$), are obvious. For the induction step, the cases of boolean combinations, $\psi= \neg \psi$, then $\psi$ was model checked and it was found to be true. Thus, $\mathcal{M}, s$ $\not\models$ $\psi$. For  $\psi=\psi_1 \wedge \psi_2$, then $\psi_1$ and $\psi_2$ were model checked and at least one of them was found to be false. Therefore, $\mathcal{M}, s$ $\not\models$ $\psi$.  The proof for $\psi$ = ${\sabotage_n}\abbrv{U}(\abbrv{Sat}(\psi_1), \abbrv{Sat}(\psi_2))$ then  $\mathcal{M}, s \not\models \psi$ is similar to the above case (similar for $\release$).

      
\end{proof}

The following theorem establishes the complexity of our model checking algorithm.

\begin{thm}\label{theoModelcheclTOL}
	The model checking problem of \abbrv{TOL} on \abbrv{WTA} is \abbrv{PSPACE}-complete.
\end{thm}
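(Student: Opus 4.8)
The plan is to prove the two directions separately: \abbrv{PSPACE}-hardness and membership in \abbrv{PSPACE}. For the lower bound, I would reduce from a known \abbrv{PSPACE}-complete problem about timed automata, the most convenient being location reachability (equivalently, \abbrv{TCTL} model checking) on \abbrv{TA} \cite{alurD94}. Given a \abbrv{TA} and a target location, view the \abbrv{TA} as a \abbrv{WTA} by assigning weight $1$ to every edge and marking the target location with a fresh atom $p$. Because all weights are strictly positive, with grade $0$ the Demon can deactivate nothing, so $\naww{\sabotage_0}$ behaves as a \emph{universal} path quantifier over $\abbrv{WTS}(\mathcal{A})$; hence $\neg\,\naww{\sabotage_0}\,(\bot \release \neg p)$ expresses ``some run reaches a $p$-location'', and the reduction is polynomial. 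The same grade-$0$ trick together with a fresh formula clock, the freeze operator $j.\varphi$ and the constraints $\phi\in\Phi(X\cup J)$ embeds full \abbrv{TCTL} into \abbrv{TOL} over the same \abbrv{WTS} semantics, which confirms that the intractability is already present without any genuine strategic content.

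For the upper bound I would run Algorithm~\ref{alg:labeling} not over the infinite $\abbrv{WTS}(\mathcal{A})$ but over a region-style finite quotient $\mathcal{G}$ of it relative to the clocks $X\cup J$, whose nodes are pairs (location, region) representable in space polynomial in $|\mathcal{A}|$ and $|\varphi|$, and whose edge relation, atomic labelling (the clock constraints $\phi$) and edge weights are computable in polynomial time. The base cases and boolean cases of Algorithm~\ref{alg:labeling} are immediate, and the freeze case $j.\psi_1$ is handled by resetting clock $j$, using the remark after Definition~\ref{def:sat} that satisfaction does not depend on the formula-clock valuation. The two strategic cases reduce, through the one-step obstruction predecessor $\blacktriangledown(n,\cdot)$, to solving on $\mathcal{G}$ a finite \emph{reachability} game (for $\until$, Algorithm~\ref{alg:labeling1}) or \emph{safety} game (for $\release$, Algorithm~\ref{alg:labeling2}) in which at each node Demon picks a legal budget-$n$ deactivation set $E'\subsetneq E(s)$ with $\sum_{e\in E'}\mathsf{W}(e)\le n$ and the environment resolves the remaining non-determinism; by positional determinacy one may restrict to memoryless Demon strategies, matching the ``we can suppose that $\strat$ is memoryless'' step used in the soundness argument. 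The point is to decide membership in Demon's winning region \emph{without} materialising the (exponential-size) attractor that Algorithms~\ref{alg:labeling1}–\ref{alg:labeling2} compute explicitly: since each configuration is polynomial-size, the set of legal one-step obstruction moves out of a configuration is polynomially enumerable and checkable, and the objective is reachability/safety, this can be done by an on-the-fly space-reusing search in the spirit of the \abbrv{PSPACE} algorithm for \abbrv{TCTL} \cite{alurD94}; structural recursion over the subformulas of $\varphi$ only multiplies the space bound by $|\varphi|$, because the recursive evaluations of $\abbrv{Sat}(\psi_1),\abbrv{Sat}(\psi_2)$ at visited nodes are performed sequentially and their workspace is reclaimed. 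Together with the hardness reduction this gives \abbrv{PSPACE}-completeness.

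The step I expect to be the main obstacle is exactly the last one: justifying that the strategic fixpoints of Algorithms~\ref{alg:labeling1}–\ref{alg:labeling2} stay within polynomial space even though the region quotient $\mathcal{G}$ is exponential in $|\mathcal{A}|$, so that reading those algorithms literally only bounds the running time as polynomial in $|\mathcal{G}|$. The work is to recast the least/greatest fixpoint as an on-the-fly alternating reachability/safety query over the succinctly represented $\mathcal{G}$ and to check carefully that (i) Demon's moves and the environment's responses at each node are polynomially bounded and polynomially verifiable, (ii) the reachability/safety shape of the objective together with positionality lets the search be organised so that only polynomial space is used at a time, and (iii) the bookkeeping for nested obstruction operators genuinely reuses space. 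By contrast, the hardness reduction and the soundness/completeness of the labelling (already sketched in the paper) are routine given the earlier definitions and the convexity/zone machinery of Section~\ref{subsection:graph}.
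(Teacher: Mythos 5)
Your skeleton matches the paper's: hardness by observing that the grade-$0$ fragment of \abbrv{TOL} over weight-trivial \abbrv{WTA} collapses to \abbrv{TCTL} over \abbrv{TA} (so \abbrv{PSPACE}-hardness is inherited from \cite{alurD94}), and membership by evaluating the labelling algorithm over a finite zone/region quotient in the style of Alur--Dill. Two points where you diverge are worth noting. First, you insist on assigning weight $1$ to every edge before invoking the grade-$0$ collapse; this is actually a necessary precaution the paper omits, since a $0$-strategy may still deactivate zero-weight edges, so $\naww{\sabotage_0}$ is a genuine universal path quantifier only when no edge has weight $0$. Second, and more importantly, you explicitly confront the fact that the quotient has size $O(|\varphi|\cdot(|L|+|T|)\cdot D!\cdot\prod_x C_x)$, i.e.\ exponential in the input under binary encoding, so that running Algorithms~\ref{alg:labeling1}--\ref{alg:labeling2} literally only yields an \abbrv{EXPTIME} bound; \abbrv{PSPACE} membership requires recasting the $\blacktriangledown$-fixpoints as on-the-fly alternating reachability/safety queries with space reuse across nested operators. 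The paper's own membership argument glosses over exactly this step (its concluding time bound $O(|\varphi|\cdot(|L|+|T|))$ cannot be right as stated, since it would contradict the hardness half unless $\mathrm{P}=\mathrm{PSPACE}$), so your identification of it as ``the main obstacle'' is accurate rather than a defect of your plan --- though be aware that you, like the paper, have only sketched and not executed that step, and that your claim that Demon's one-step moves are ``polynomially enumerable'' should be replaced by the weaker but sufficient observation that the relevant predicate (can the edges escaping the target set be cut within budget $n$) is polynomially decidable without enumerating subsets, which is precisely what the operator $\blacktriangleright(z,n,\overline{\mathcal{Z}})$ computes.
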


\begin{proof} \abbrv{PSPACE-hardness}:  The proof follows from the \abbrv{PSPACE-hardness} of the model-checking of the logic $\abbrv{TCTL}$ over \abbrv{TA} \cite{alurD94}, since \abbrv{WTA} \cite{ASG01} are an extension of \abbrv{TA} and \abbrv{TOL} is the corresponding extension of $\textsf{TCTL}$ and \abbrv{OL} \cite{CLM23}: If we take the 0-fragment of \abbrv{TOL} to be the set of \abbrv{TOL} formulas in which the grade of any strategic operator is $0$ (i.e., $\abbrv{TOL}^{0}$) then $\abbrv{TOL}^{0}$ = $\abbrv{TCTL}$ and \abbrv{WTA} = \abbrv{TA}.

 \abbrv{PSPACE-membership}: To prove \abbrv{PSPACE}-membership, we use the idea suggested in \cite{alurD94}. Let $\mathcal{A}$ be a \abbrv{WTA}, $\varphi$ $\in$ $\textsf{TOL}$, $D$ the number of clocks of the automaton $\mathcal{A}$, $C_x$ the maximal constant associated with of clocks $\mathcal{A}$ and $\varphi$, $m$ the nesting depth of the largest fixed-point quantifier in $\varphi$. We consider the zone graph $\abbrv{ZG}(\mathcal{A}, \varphi)$ \cite{BYJ04} associated with $\mathcal{A}$ and the formula $\varphi$ with clocks $X$. The zone graph depends on the maximum constants with which the clocks in $\mathcal{A}$ and $\varphi$ are compared. Using the zone graph $\abbrv{ZG}(\mathcal{A}, \varphi)$, model checking of $\abbrv{TOL}$ formulas can be done in polynomial time in the number of $D$, $C$, and $m$. This can be shown as in \cite{alurD94}.  According to \cite{APH02}, $\mathcal{A}$ $\models$ $\varphi$ iff $\mathcal{A}'$ $\models$ $\varphi$, where $\mathcal{A}'$ = $untimed(\mathcal{A})$ is the untimed automaton associated with $\mathcal{A}$ and $\varphi$ (the zone graph $\abbrv{ZG}(\mathcal{A}, \varphi)$ \cite{BYJ04}). The size of $\mathcal{A}'$ is polynomial in the length of the timing constraints of the given \abbrv{WTA} automaton and in the length of the formula $\varphi$ (assuming binary encoding of the constants), that is, $\vert \mathcal{A}' \vert$ = $O(\vert \varphi \vert \ \cdot \ (\vert L \vert + \vert T \vert) \ \cdot \ D! \ \cdot \ \prod_{x \in X} C_x)$. The zone graph $\mathcal{A}'$ can be constructed in linear time, which is also bounded by $O(\varphi \ \cdot \ (\vert L \vert + \vert T \vert) \ \cdot \ D! \ \cdot \prod_{x \in X} C_x)$ \cite{alurD94}. On the zone graph, untimed model checking can be done in time $O((\vert \varphi \vert \cdot \vert \mathcal{A}'\vert)$. Obviously, we get an algorithm of time complexity $O(\vert \varphi \vert \ \cdot \ (\vert L \vert + \vert T \vert))$. 
\end{proof}

\section{Relationship of TOL with other Logics}\label{sec:relationship}

In this section, we establish relative relation  between  \abbrv{TOL} with the  Timed Computation Tree Logic (\abbrv{TCTL}) \cite{alurD94}, Timed $\mu$-Calculus  ($\abbrv{T}_{\mu}$) \cite{HTN94} and Timed Alternating-Time Temporal Logic (\abbrv{TATL}) \cite{HPV06}.


\subsection{TOL and TCTL}

Here, we show that \abbrv{TOL} extends \abbrv{TCTL}\cite{alurD94} with a reduction to a fragment of our logic.   We define the $0$-fragment of \abbrv{TOL} to be the set of \abbrv{TOL}  formulae in which the grade of any strategic operator is $0$. We denote by \abbrv{TOL}$^0$ such a fragment. Let $(-)^{\bullet}$ be the mapping from  \abbrv{TOL}$^0$ to \abbrv{TCTL} formulas that translate each strategic operator $\naww{\sabotage_0}$ with the universal path operator $\mathsf{A}$ of \abbrv{TCTL}, i.e., the function recursively defined as follows.
\begin{definition}[Translation of \abbrv{TOL} to \abbrv{TCTL}]
Let $\varphi$ be a \abbrv{TOL} formula. Then \abbrv{TCTL} fragment formula $(\varphi)^{\bullet}$
is defined inductively as follows, where $p$ $\in$  $\Ap$ 

$$\begin{array}{l@{=\quad }l@{\qquad} l@{=\quad }l } 

    (\top)^\bullet& \top  \\

   (p)^\bullet  & p  \\
   
    (\neg \varphi)^\bullet & \neg (\varphi)^\bullet \\

    (\varphi_1 \land \varphi_2)^\bullet &(\varphi_1)^\bullet \land (\varphi_2)^\bullet \\

    (\phi)^\bullet &\phi \\ 
   

    (\naww{\sabotage_0} (\varphi_1\until \varphi_2))^\bullet &\mathsf{A} ((\varphi_1)^\bullet \until (\varphi_2)^\bullet)\\
    
    (\naww{\sabotage_0} (\varphi_1 \release  \varphi_2))^\bullet &
    \mathsf{A} ((\varphi_1)^\bullet \release (\varphi_2)^\bullet) \\
    (j.\varphi)^\bullet &j.(\varphi)^\bullet 
    
\end{array}$$
\end{definition}


\noindent Note that the function $(-)^\bullet$ induces a bijection between \abbrv{TOL} and \abbrv{TCTL} formulae. 
\begin{thm}
Let $\mathcal{A}$ be a \abbrv{WTA}. For every model $\mathcal{M}$ = \abbrv{WTS}$(\mathcal{A})$, state $s$, and formula $\varphi\in$ \abbrv{TOL}$^0$, we have that $\mathcal{M},s\models \varphi$ if and only if $\mathcal{M},s\models_{TCTL} (\varphi)^\bullet$, where $\models_{TCTL}$ is the \abbrv{TCTL} satisfaction relation. 
\end{thm}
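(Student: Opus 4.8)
The plan is to prove the theorem by structural induction on the \abbrv{TOL}$^0$ formula $\varphi$, showing that $\mathcal{M},s\models\varphi$ iff $\mathcal{M},s\models_{TCTL}(\varphi)^\bullet$ for every state $s$ of $\mathcal{M}=\abbrv{WTS}(\mathcal{A})$. The conceptual heart of the argument is a single observation about the semantics of the strategic operator at grade $0$: a $0$-strategy $\strat$ is a function $\strat\colon H\to 2^E$ with $\strat(h)\subset E(last(h))$ and $\bigl(\sum_{e\in\strat(h)}\mathsf{W}(e)\bigr)\le 0$; since weights are in $\mathbb{N}_{\ge 0}$, this forces $\strat(h)=\emptyset$ for every history $h$ (no edge of positive weight can be selected, and even weight-$0$ edges cannot be taken because the constraint is $\subset$, or more simply because an empty set is the only one summing to $\le 0$ that the Demon can profitably use — in any case the Demon deactivates nothing). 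Consequently $Out(s,\strat)$ is exactly the set of \emph{all} paths of $\mathcal{M}$ starting at $s$, and the existential quantifier ``there is a $0$-strategy $\strat$ such that for all $\rho\in Out(s,\strat)\dots$'' collapses to ``for all paths $\rho$ starting at $s\dots$'', which is precisely the meaning of the universal path quantifier $\mathsf{A}$ in \abbrv{TCTL}.

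With that lemma in hand the induction is routine. The base cases $\varphi=\top$, $\varphi=p$, and $\varphi=\phi$ (a clock constraint) are immediate, since $(-)^\bullet$ is the identity on these and the satisfaction clauses for \abbrv{TOL} and \abbrv{TCTL} coincide verbatim (atoms via $K(s)$, clock constraints via $\nu\models\phi$). The boolean cases $\neg\varphi_1$ and $\varphi_1\land\varphi_2$ follow directly from the induction hypothesis and the fact that $(-)^\bullet$ commutes with $\neg$ and $\land$. The freeze case $j.\varphi_1$ is equally direct: $\mathcal{M},s\models j.\varphi_1$ iff $\mathcal{M},(l,\nu[j\gets 0])\models\varphi_1$, which by the induction hypothesis is iff $\mathcal{M},(l,\nu[j\gets 0])\models_{TCTL}(\varphi_1)^\bullet$, iff $\mathcal{M},s\models_{TCTL}j.(\varphi_1)^\bullet=(j.\varphi_1)^\bullet$ — here one uses that \abbrv{TCTL} (in the timed-automata formulation with formula clocks) has the same freeze semantics. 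For the two temporal-strategic cases $\naww{\sabotage_0}(\varphi_1\until\varphi_2)$ and $\naww{\sabotage_0}(\varphi_1\release\varphi_2)$ I apply the collapse lemma: the \abbrv{TOL} clause says there is a $0$-strategy $\strat$ with every $\rho\in Out(s,\strat)$ satisfying the until/release pattern, which by the lemma is equivalent to every path from $s$ satisfying it, i.e.\ $\mathsf{A}((\varphi_1)^\bullet\until(\varphi_2)^\bullet)$ resp.\ $\mathsf{A}((\varphi_1)^\bullet\release(\varphi_2)^\bullet)$ by the \abbrv{TCTL} semantics of $\mathsf{A}$, combined with the induction hypothesis applied to $\varphi_1,\varphi_2$ along the states $\rho_k$ of each path.

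The main (indeed only) subtlety I would flag is making the collapse lemma airtight: one must check that the two notions of ``path'' agree. In \abbrv{TOL} the paths are the infinite runs of $\abbrv{WTS}(\mathcal{A})$ (alternating delay and discrete transitions), and one must confirm that the \abbrv{TCTL} semantics over the same \abbrv{WTS} quantifies over exactly this set of runs — in particular agreeing on the convention for how the until/release indices $\rho_k$ are read off the mixed delay/discrete sequence, and on any progress/non-Zeno requirement imposed on admissible runs. Provided the two logics are interpreted over the same \abbrv{WTS} with the same run semantics (which is the natural reading, and is what the paper's setup in Definitions~\ref{defSemTA} and the semantics of \abbrv{TOL} suggest), the lemma and hence the theorem go through. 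A secondary point worth a sentence is the observation already noted in the text that $(-)^\bullet$ is a bijection between \abbrv{TOL}$^0$ and \abbrv{TCTL}, so the equivalence transfers in both directions without loss.
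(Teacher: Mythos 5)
Your proposal takes essentially the same route as the paper: the paper's entire proof is the single observation that the set of paths compatible with any $0$-strategy from $s$ equals the set of all paths from $s$, so the strategic operator collapses to the \abbrv{TCTL} universal path quantifier, with the structural induction left implicit. Your write-up is just a more detailed version of this; the only point worth noting is that your justification that every $0$-strategy deactivates nothing is slightly shaky when edges of weight $0$ exist (the sum constraint $\le 0$ does not by itself exclude them), but this caveat applies equally to the paper's own one-line argument.
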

\begin{proof}
   The result follows by observing that, for any state $s$, the set of paths compatible with a $0$-strategy $\strat$ starting at $s$ is equal to the set of paths starting at $s$ and that given any two  $0$-strategies $\strat_1$ and $\strat_2$ we have that $Out(s,\strat_1)=Out(s,\strat_2)$. 
\end{proof}

\subsection{TOL and $T_{\mu}$}

$\abbrv{T}_{\mu}$ is an extension of the modal $\mu$-calculus~\cite{KupfermanVardiGraded} with clocks. The formulas $\abbrv{T}_{\mu}$ are built from state predicates by boolean connectives, a temporal next operator, the reset quantifier for clocks, and the least fix-point operator ($\mu$). Let $\Ap$ be a non-empty at most countable set of atomic propositions, and $\mathcal{V}$ a non-empty at most countable set of formula variables.  The formal definition of the formulas  is as follows.
$$\varphi ::= \top \mid  p \mid Y \mid \phi\mid \neg \varphi \mid \varphi_1 \wedge \varphi_2 \mid \ \varphi_1 \vartriangleright \varphi_2 \mid  j.\varphi \mid \mu\,Y. \varphi$$

\noindent where $p\in \Ap$, $Y\in \mathcal{V}$, $\phi$ $\in$ $\Phi(X)$ and $j$ $\in$ $X_{\varphi}$, where $X$ is the set of clocks of the automaton and $X_{\varphi}$ is a set of clocks of the formula. In $\mu\, Y. \varphi$ it is required that the variable $Y$ occurs in the scope of an even number of negations in $\varphi$.  An occurrence of a propositional variable that is within the scope of $\mu$ is said to be bound. Free variables are variables that are not bound. A $\abbrv{T}_{\mu}$ formula is closed if all its variables are bound.  Let $\varphi$ and $\psi$ two formulas, and suppose that no variable that is bound in $\varphi$ is free in $\psi$\footnote{One can always respect this constraint by renaming the bound variables of $\varphi$.}. We write $\varphi[\psi/Y]$ to denote the result of the substitution of $\psi$ to each free occurrence of $Y$ in $\varphi$. 
The greatest fix-point operator can be defined by $\nu \, Y.\varphi$ = 
The $\vartriangleright$ operator can be considered a (timed) next operator, where a state satisfies $ \varphi_1 \vartriangleright \varphi_2$ if one of its time successors has an action transition whose destination state satisfies $\varphi_2$, and every intermediate time successor (including this one) fulfills $\varphi_1$ or $\varphi_2$. 
Semantically, $\abbrv{T}_{\mu}$ formulas are interpreted in relation to states in the \abbrv{TTS} $\mathcal{M}$ associated with the \abbrv{TA} $\mathcal{A}$ ($\mathcal{M}$ = \abbrv{WTS}($\mathcal{A}$)). An assignment $\alpha: \mathcal{V}\to 2^S $,  is a function that sends propositional variables to subsets of $S$. Given an assignment $\alpha$,  a subset $U$ of $S$, and a variable $R$, $\alpha[Y \to U]$ is the assignment defined by $\alpha[Y\to U](R)=U$ if $R=X$ and $\alpha[X\to U](R)=\alpha(R)$ otherwise. The satisfaction relation between a Timed Transition Systems (\abbrv{TTS}) (a class of submodels of \abbrv{WTS}) $\mathcal{M}$ \cite{HMA92}, a state $s$ = $(l, \nu)$ of $\mathcal{M}$ and $\abbrv{T}_{\mu}$ formula $\varphi$ ($\mathcal{M}, s \models_{\alpha}$ $\varphi$) is given inductively as follows:
\begin{itemize}
    \item $\mathcal{M}, s\models_{\alpha} \top$ for all state $s$,
    \item  $\mathcal{M}, s\models_{\alpha} p$ iff $p\in K(s)$,
    \item  $\mathcal{M}, s\models_{\alpha} Y$ iff $\alpha(Y)$,
    \item  $\mathcal{M}, s \models_{\alpha} \phi$ iff  $\nu \models \phi$,
    \item  $\mathcal{M},s\models_{\alpha}  \neg \varphi$ iff not $\mathcal{M}, s\models_{\alpha}  \varphi$ (notation $\mathcal{M},s\not \models_{\alpha}  \varphi$), 
    \item  $\mathcal{M}, s \models_{\alpha}  \varphi_1$ $\wedge$ $ \varphi_2$ iff $\mathcal{M}, s\models_{\alpha}  \varphi_1$ and $\mathcal{M}, s\models_{\alpha}  \varphi_2$, 
     \item  $\mathcal{M}, s \models_{\alpha}  \varphi_1 \vartriangleright \varphi_2$ iff  for some states $s'$, $s''$ $\in$ $S$, some delay $\delta$ $\in$ $\mathbb{R}_{\ge 0}$ and some $b$ $\in$ $\Sigma$ such that $s \xrightarrow{\delta} s' \xrightarrow{\textit{b}} s''$  and $\mathcal{M}, s'' \models_{\alpha}  \varphi_2$ and for all $\delta'$ $\in$ $\mathbb{R}_{\ge 0}$, $0$ $\le$ $\delta'$ $\le$ $\delta$ then $\mathcal{M}, s' \models_{\alpha}  \varphi_1 \vee \varphi_2$   
     \item  $\mathcal{M}, s \models_{\alpha}  j.\varphi$ iff  $\mathcal{M}, (l, \nu[j \gets 0]) \models_{\alpha} \varphi$,
    \item  $\mathcal{M}, s \models_{\alpha}  \mu\,Y. \varphi$ iff $ s \in \bigcap\set{Q \subseteq S \mid  \{s' =(l', \nu') \in S  \mid \mathcal{M}, (l', \nu'[Y \gets Q])} \models_{\alpha} \varphi  \} \subseteq Q \}$
    \end{itemize}
\def\K{\mathfrak K}
Here, we prove that \abbrv{TOL} extends $\abbrv{T}_{\mu}$ with a reduction to a fragment of our logic. More precisely, we show how to translate each \abbrv{TOL} formula $\varphi$ to a $\abbrv{T}_{\mu}$ formula $\tradTmu{\varphi}$ and that given a \abbrv{WTA} $\mathcal{A}$ such that $\mathcal{M}$ = \abbrv{WTS}($\mathcal{A}$). we have that $\abbrv{Sat}{\varphi} =\abbrv{Sat}{\tradTmu{\varphi}}$. The formal syntax of $\abbrv{T}_{\mu}$  is the following. 
Now,  let $\tradTmu{-}$ be the function from  \abbrv{TOL} formulas to $T_{\mu}$ formulas,  defined as follows:
$$\begin{array}{l@{=\quad }l@{\qquad} l@{=\quad }l }
   \tradTmu{\top}& \top  \\
    \tradTmu{p} & p  \\
      \tradTmu{\phi} & \phi \\
    \tradTmu{\neg \varphi} & \neg \tradTmu{\varphi} \\
    \tradTmu{\varphi_1 \land \varphi_2} &\tradTmu{\varphi_1} \land \tradTmu{\varphi_2} \\
     \tradTmu{j.\varphi} & j.\tradTmu{\varphi} \\
    \tradTmu{\naww{\sabotage_0} (\varphi_1\until \varphi_2)}& \mu\, Y. (\tradTmu{\varphi_2} \vee (\tradTmu{\varphi_1} \vartriangleright Y))\\
    \tradTmu{\naww{\sabotage_0}( \varphi_1  \release \varphi_2))} & \nu\, Y (\tradTmu{\varphi_2} \land ( \tradTmu{\varphi_1} \vartriangleright Y))   \\ 
\end{array}$$

\noindent Note that $\tradTmu{\varphi}$ is a closed  $T_{\mu}$ formula for every formula $\varphi$, thus the function $\tradTmu{-}$ has as image a proper fragment of  $T_{\mu}$. Let us call \emph{unary} a \abbrv{TOL} model $\mathcal{M}$  such that $W(t)=1$ for all $t\in T$. We have the following 


\begin{thm}
    If $\mathcal{M}$ is a unary \abbrv{WTS} then for every \abbrv{TOL} formula $\varphi$ we have that $\mathcal{M}, s \models \varphi$  iff  $\mathcal{M}, s \models \tradTmu{\varphi}$. 
\end{thm}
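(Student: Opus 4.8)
The plan is to establish the stronger claim $\abbrv{Sat}(\varphi)=\abbrv{Sat}(\tradTmu{\varphi})$ for every \abbrv{TOL} formula $\varphi$, by induction on the structure of $\varphi$. Since $\tradTmu{-}$ is defined homomorphically on $\top$, atomic propositions $p$, clock constraints $\phi$, negation, conjunction and the freeze quantifier $j.(-)$, these cases are immediate: each reduces to the matching clause of Definition~\ref{def:sat} and of the $\abbrv{T}_\mu$ semantics together with the induction hypothesis, the case $j.\varphi$ by applying the hypothesis at the reset state $(l,\nu[j\gets 0])$. Hence the content of the theorem lies entirely in the two strategic operators $\naww{\sabotage_0}(\varphi_1\until\varphi_2)$ and $\naww{\sabotage_0}(\varphi_1\release\varphi_2)$; note that since $\mathcal{M}$ is \emph{unary} — every edge has weight $1$ — the only grade-$0$ strategy is the one disabling no edge, so $\naww{\sabotage_0}$ acts as a plain path quantifier whose $\abbrv{Sat}$-set is exactly what Algorithms~\ref{alg:labeling1} and~\ref{alg:labeling2} return.

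For these two cases I would route both sides through their extremal-fixpoint characterisations, all taking place in the finite complete lattice $2^{S}$ of zone-graph state sets. On the \abbrv{TOL} side, soundness and completeness of the model-checking procedure, together with the shape of the backward searches (Algorithm~\ref{alg:labeling1} iterates upward from $\abbrv{Sat}(\varphi_2)$, Algorithm~\ref{alg:labeling2} iterates downward from $S$), identify $\abbrv{Sat}(\naww{\sabotage_0}(\varphi_1\until\varphi_2))$ with the least fixpoint of $F_U(Z)=\abbrv{Sat}(\varphi_2)\cup(\abbrv{Sat}(\varphi_1)\cap\blacktriangledown(0,Z))$ and $\abbrv{Sat}(\naww{\sabotage_0}(\varphi_1\release\varphi_2))$ with the greatest fixpoint of $F_R(Z)=\abbrv{Sat}(\varphi_2)\cap(\abbrv{Sat}(\varphi_1)\cup\blacktriangledown(0,Z))$; finiteness of the lattice makes the iterations converge, and Knaster--Tarski identifies the limits with the fixpoints. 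On the $\abbrv{T}_\mu$ side, the clauses for $\mu Y$ and $\nu Y$ in the $\abbrv{T}_\mu$ semantics identify $\abbrv{Sat}(\tradTmu{\naww{\sabotage_0}(\varphi_1\until\varphi_2)})$ and $\abbrv{Sat}(\tradTmu{\naww{\sabotage_0}(\varphi_1\release\varphi_2)})$ with the least, resp. greatest, fixpoint of the operators obtained by plugging the $\vartriangleright$-predecessor $P(Z)=\{s\mid\mathcal{M},s\models\tradTmu{\varphi_1}\vartriangleright Z\}$ into $Z\mapsto\abbrv{Sat}(\tradTmu{\varphi_2})\cup P(Z)$, resp. $Z\mapsto\abbrv{Sat}(\tradTmu{\varphi_2})\cap P(Z)$. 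Invoking the induction hypothesis $\abbrv{Sat}(\varphi_i)=\abbrv{Sat}(\tradTmu{\varphi_i})$, the theorem then reduces to a single operator-level statement: in a unary \abbrv{WTS} the obstruction-predecessor construction and the $\vartriangleright$-predecessor induce the \emph{same} monotone operators on $2^{S}$; equal monotone operators on a complete lattice have equal least and greatest fixpoints, which closes both cases.

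The step I expect to be the main obstacle is precisely this operator identity, i.e., matching $\abbrv{Sat}(\varphi_1)\cap\blacktriangledown(0,Z)$ (modulo the $\abbrv{Sat}(\varphi_2)$-part absorbed on both sides) with $\{s\mid\mathcal{M},s\models\varphi_1\vartriangleright Z\}$. There are two layers to reconcile. First, $\vartriangleright$ quantifies over a dense delay $\delta$ with an ``all intermediate time successors satisfy $\tradTmu{\varphi_1}\vee Z$'' clause, whereas $\blacktriangledown(0,\cdot)$ lives on the zone graph, where the delay is collapsed into $\abbrv{time\text{-}pred}$ before $\abbrv{disc\text{-}pred}$; here I would use convexity of zones (the propositions following the definition of $\blacktriangledown$) to argue that a symbolic state certifying $\abbrv{disc\text{-}pred}$ after $\abbrv{time\text{-}pred}$ automatically certifies membership of \emph{every} intermediate valuation in the relevant zone, so the dense all-intermediate condition is faithfully captured by a symbolic-state membership check. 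Second, and more delicate, one must check that the discrete quantification patterns agree: the weight condition $\blacktriangleright(s,0,\overline{Z})$ together with $s\in\abbrv{Pred}(Z)$ must be shown equivalent to the discrete part of the $\vartriangleright$-predecessor, and this equivalence holds exactly because all weights equal $1$ and the grade is $0$. This is the point where the hypothesis ``$\mathcal{M}$ unary'' is genuinely needed; the identity fails for arbitrary weights or positive grades, which is also why the graded obstruction operators of \abbrv{TOL} have no counterpart in $\abbrv{T}_\mu$. Once this lemma is in place, the fixpoint transfer of the previous paragraph finishes both strategic cases, and hence the induction.
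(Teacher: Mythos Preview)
The paper gives no proof of this theorem; it is stated bare immediately after the definition of the translation and of ``unary''. Your induction-on-formulas plus Knaster--Tarski plan is the natural route and parallels what the paper does for Theorem~2, so structurally there is nothing to compare against.

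That said, there is a real gap at exactly the place you call ``the main obstacle'', and it is fatal rather than merely delicate. According to the paper's own $T_\mu$ semantics, $\varphi_1\vartriangleright\varphi_2$ is \emph{existential} at the discrete level (``for some states $s',s''$ \ldots''), so $\mu Y.\bigl(\tradTmu{\varphi_2}\vee(\tradTmu{\varphi_1}\vartriangleright Y)\bigr)$ computes an existential-until set. On the \abbrv{TOL} side, in a unary model the only $0$-strategy is the empty one, so $\naww{\sabotage_0}(\varphi_1\until\varphi_2)$ collapses to the \emph{universal} $\mathsf{A}(\varphi_1\until\varphi_2)$ --- this is precisely Theorem~2. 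Correspondingly, $\blacktriangleright(s,0,\overline{Z})$ in a unary model forces every weighted successor of $s$ to lie in $Z$ (any edge into $\overline{Z}$ contributes weight $1>0$), so $\blacktriangledown(0,Z)$ is a universal-type, not an existential-type, predecessor. The operator identity you plan to establish therefore fails. Concretely: take $s_0$ with a unit-weight self-loop and a unit-weight edge to $s_1$, with $p$ holding only at $s_1$; then $s_0\not\models\naww{\sabotage_0}(\top\until p)$ because the self-loop path never sees $p$, yet $s_0\models\mu Y.\bigl(p\vee(\top\vartriangleright Y)\bigr)$. The $\release$ case has the dual mismatch.

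In short, the equivalence cannot be established with the translation and the $\vartriangleright$-semantics exactly as the paper writes them; one of the two has to be dualised (replace $\vartriangleright$ by its universal timed-next dual, or adjust the $\mu$/$\nu$ encoding accordingly). Your sketch is otherwise on the right track --- the homomorphic cases, the fixpoint reformulation via Algorithms~\ref{alg:labeling1} and~\ref{alg:labeling2}, and the appeal to zone convexity for the dense-time side condition are all appropriate --- but the quantifier mismatch must be repaired before the fixpoint-transfer argument can close the strategic cases.
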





\subsection{TOL and TATL}

Here, we compare  our \abbrv{TOL} with \abbrv{TATL} \cite{HPV06}. In particular, we show that given a \abbrv{TOL} formula $\varphi$ and a \abbrv{WTA} $\mathcal{A}$ ($\mathcal{M}$ = \abbrv{WTS}($\mathcal{A}$)) that satisfies it, there is a Concurrent Game Structure (\abbrv{CGS})\cite{HPV06} that satisfies a \abbrv{TATL} translation of $\varphi$. First, define a rooted \abbrv{TOL} as a pair $\tuple{\mathcal{M}, s}$ where $\mathcal{M}$ is a \abbrv{WTS} and $s$ is one of its states. Given a natural number $n$, let $S^{\leq n}$ be the subset of $S \times 2^E$ defined by $(s,E)\in S^{\leq n}$ iff either $E=\emptyset$ or each $e\in E$ has $s$ as source  and $(\sum_{e\in E} W(e)) \leq n$.   If $\tuple{\mathcal{M},s}$ is a rooted \abbrv{TOL} model and $n$ is a natural number, then $\mathbb{G}^n_\mathcal{M}=\tuple{Q, q_i, \Ap,\Ag, act_D, act_T, P, \delta, \mathcal{V} }$ is the \abbrv{CGS}, where:

\begin{itemize}
    \item $Q=Q_D \cup Q_T$ is a set of states, where $Q_D=S$ and $Q_T= S^{\leq n} $. Moreover, $q_I=s$ is the initial state. The set $Q_D$ is the set of states where is the Demon's turn  to move, while $Q_T$ is the set of states in which is the Traveler's turn to move,
    \item $\Ap$ is a set of atomic formulas labeling states of $\mathcal{M}$, 
     \item $\Ag=\set{D,T}$ where $D$ is the Demon and $T$ is the Traveler, 
    \item The set of actions $act_D$ of the Demon is equal to the set of subset of $R$ appearing in $S^{\leq n}$ plus the idle action $\star$. More precisely $act_D=\set{E\in 2^R \, : \, \exists q \in S^{\leq n} \land q=\tuple{s,E} } \cup \set{\star}$,
    \item The set of actions $act_T$ of the Traveler is $R\cup \set{\star}$. We denote by $act = act_D \cup act_T$,
    \item The protocol function $P: Q \times \Ag \to 2^{act}\setminus \emptyset $ is defined as follows. For every $q\in Q_D$, we have that $P(q,i)$  is equal to  $X_q=\set{ E\in 2^R \, : \, \tuple{q,E}\in S^{\leq n}}$ if $i=D$, and $\set{\star}$ otherwise. For every $q\in Q_T$, we have that if $q=\tuple{s,E}$ then $P(q,i)$ is equal to $\set{ e\in R \, : \, e\not\in E \land s'\in S  }$ if $i=T$ and it is equal to $\set{\star}$ otherwise,  
    \item The transition function $\delta: Q \times act_D \times act_T \to Q$ is defined as follows: $\delta(q,E,\star)=\tuple{q,E}$ iff $q\in Q_D$ and $\delta(q,\tuple{s,s'},\star)= s'$ iff $q=\tuple{s,E}\in Q_T$ and $\tuple{s,s'}\notin E$; 
    \item The labeling function $V: S\to 2^{\Ap}$ is defined by $V(q)= K(q)$ for any $q\in Q_D$ and $V(q)=\emptyset$ for any $q\in Q_T$. 
\end{itemize}
Remark that given a  $\mathcal{M}$ and a natural number $n$, the \abbrv{CGS} $\mathbb{G}^n_\mathcal{M}$ can have a number of states that is \textbf{exponential} in the number of states of $\mathcal{M}$.  Consider the  function from \abbrv{TOL} formulas to \abbrv{TATL} formulas, inductively defined by:

$$\begin{array}{l@{=\quad }l@{\qquad} l@{=\quad }l }
   \tradatl{\top}& \top  \\
    \tradatl{p} & p  \\
     \tradatl{\phi} & \phi \\
    \tradatl{\neg \varphi} & \neg \tradatl{\varphi} \\
    \tradatl{\varphi_1 \land \varphi_2} &\tradatl{\varphi_1} \land \tradatl{\varphi_2} \\

      \tradatl{j.\varphi} & j.\tradatl{\varphi} \\
    \tradatl{\naww{\sabotage_n} (\varphi_1\until \varphi_2)}& \exis{D} {{\tradatl{\varphi}} \until {\tradatl{\psi}}}\\

    \tradatl{\naww{\sabotage_n}( \varphi_1 
 \release \varphi_2))} &
    \exis{D} {\tradatl{\varphi}}\release{\tradatl{\psi}} 
    
\end{array}$$

Given a \abbrv{CGS} $\mathbb{G}^n_\mathcal{M}$ as the one defined above, and a path $\rho$ of the \abbrv{CGS}, we write $\rho^D$ for the subsequence of $\rho$ containing only states that are in $Q_D$. If $\Delta$ is a \abbrv{TATL} strategy and $q\in Q_D$ is a state, then $Out^D (q,\Delta)$ denotes the set of sequences $\set{\rho \in Q_D^\omega \, : \, \rho= \pi^D \text{ for some } \pi\in Out(q, \Delta) }$. For a \abbrv{TATL} formula $\psi$,  we write $\mathbb{G}^n_\mathcal{M}, q\models_D \psi$ iff either: 

\begin{enumerate}
    \item $\psi$ is a boolean formula and $\mathbb{G}^n_\mathcal{M},q \models_{TATL} \psi$, where $\models_{TATL}$ is the standard ATL satisfiability relation or
    \item $\psi$ is a strategic formula $\exis{D} \psi_1$ and there is a strategy $\Delta$ such that for all $\rho\in Out^D(q,\Delta)$ we have that $\rho$ satisfies $\psi_1$ (where the specific clauses for the temporal connectives  $\release$ and $\until$ can be easily obtained).
\end{enumerate}

  We can now prove the following the lemma.

\begin{lemma}
    Let $\varphi$ be any \abbrv{TOL} formula that contains at most a strategic operator $\naww{\sabotage_n}$, we have that $\mathcal{M},s \models \varphi$ iff $ \mathbb{G}^n_\mathcal{M} ,s \models_{D} \tradatl{\varphi}$. 
     
\end{lemma}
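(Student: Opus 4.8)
The plan is to prove the biconditional by structural induction on $\varphi$ (the inductive statement being quantified uniformly over all states, so that the hypothesis may be invoked at any state, not just at $s$). The atomic cases $\varphi\in\set{\top,p,\phi}$ are immediate: by construction $V(q)=K(q)$ for every $q\in Q_D$, so an atom $p$ holds at the $Q_D$-state $s$ of $\mathbb{G}^n_\mathcal{M}$ exactly when $p\in K(s)$, i.e.\ when $\mathcal{M},s\models p$; a clock constraint $\phi$ is evaluated on the clock-valuation component of $s$, which is literally the same object in both models; and $\tradatl{-}$ is the identity on these. The boolean cases follow from the induction hypothesis together with the fact that $\tradatl{-}$ commutes with $\neg$ and $\wedge$ and that, restricted to boolean formulas, $\models_{D}$ is by definition the standard \abbrv{TATL} satisfaction relation. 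The freeze case $j.\varphi'$ follows since both semantics reduce satisfaction at $s=(l,\nu)$ to satisfaction at $(l,\nu[j\gets 0])$, so the claim is just the induction hypothesis at that state.

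The core of the argument is the case of the unique strategic operator, say $\varphi=\naww{\sabotage_n}(\varphi_1\until\varphi_2)$; the $\release$ case is dual. Since $\varphi$ contains at most one strategic operator, $\varphi_1$ and $\varphi_2$ are strategic-operator-free, so the induction hypothesis gives, for every $q\in Q_D=S$, that $\mathcal{M},q\models\varphi_i$ iff $\mathbb{G}^n_\mathcal{M},q\models_{D}\tradatl{\varphi_i}$; this is what lets us compare the two formulas state-by-state along matching runs. The bridge is a correspondence between $n$-strategies of $\mathcal{M}$ and strategies of the Demon $D$ in $\mathbb{G}^n_\mathcal{M}$. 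Given an $n$-strategy $\strat\colon H\to 2^E$, define a \abbrv{TATL} strategy $\Delta$ for $D$ by letting $\Delta$ play, at a $Q_D$-state $s_k$ reached by a history whose $Q_D$-projection reads $s_0\cdots s_k$, the action $\strat(s_0\cdots s_k)$, and the forced action $\star$ at every $Q_T$-state (recall $P(q,D)=\set{\star}$ for $q\in Q_T$); condition (ii) in the definition of $n$-strategy is exactly what makes $\tuple{s_k,\strat(s_0\cdots s_k)}$ lie in $S^{\leq n}$, so $\Delta$ is legal. Conversely, any Demon strategy $\Delta$ induces an $n$-strategy in the same way, because the $Q_T$-state entered after a $D$-move is determined by that move, so $\Delta$'s values on $Q_D$-histories already carry all its information. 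Under this correspondence a two-step segment $s\xrightarrow{E}\tuple{s,E}\xrightarrow{\star}s'$ of the \abbrv{CGS}, available exactly when $\tuple{s,s'}\notin E$, matches a single $\strat$-compatible move of $\mathcal{M}$ (the compatibility condition $(\rho_i,\sigma,\rho_{i+1})\notin\strat(\rho_{\leq i})$ is precisely $\delta(q,\tuple{s,s'},\star)$ being defined), and the strictness $E'\subset E(last(h))$ guarantees the Traveler always has a legal move, so plays never get stuck; hence $Out^D(s,\Delta)$ equals the set of $Q_D$-projections of $Out(s,\strat)$, which under the identification $Q_D=S$ is $Out(s,\strat)$ itself.

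Combining the pieces, $\mathcal{M},s\models\naww{\sabotage_n}(\varphi_1\until\varphi_2)$ says there is an $n$-strategy $\strat$ such that every $\rho\in Out(s,\strat)$ has an index $j$ with $\mathcal{M},\rho_j\models\varphi_2$ and $\mathcal{M},\rho_k\models\varphi_1$ for all $k<j$; transporting $\strat$ to $\Delta$ and using the induction hypothesis pointwise turns this into: every $\pi\in Out^D(s,\Delta)$ satisfies $\tradatl{\varphi_1}\until\tradatl{\varphi_2}$, which --- the intermediate $Q_T$-states being labelled by $\emptyset$ and discarded by the $(-)^D$ projection used in $\models_{D}$ --- is exactly $\mathbb{G}^n_\mathcal{M},s\models_{D}\exis{D}(\tradatl{\varphi_1}\until\tradatl{\varphi_2})=\tradatl{\varphi}$; the two implications give the biconditional, and the $\release$ clause is handled identically using the release reading of $\models_{D}$. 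I expect the main obstacle to be exactly this bookkeeping: checking that the doubling of transitions and the transient empty-labelled $Q_T$-states do not perturb the until/release conditions, that a general \abbrv{TATL} Demon strategy adds nothing beyond its restriction to $Q_D$-histories, and --- in the $\release$ case --- that no maximal finite play arises, so that one may argue purely about infinite runs.
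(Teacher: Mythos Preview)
The paper states this lemma without proof, so there is nothing to compare against; your structural induction together with the back-and-forth correspondence between $n$-strategies and Demon strategies in $\mathbb{G}^n_\mathcal{M}$ is the natural argument and is essentially correct.

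One point worth tightening concerns the converse direction of the strategy correspondence. The paper's $S^{\leq n}$ allows pairs $\tuple{s,E}$ with $E$ equal to the full set of outgoing edges of $s$ whenever their total weight does not exceed $n$, whereas the definition of $n$-strategy requires a \emph{strict} subset $E'\subset E(last(h))$. A general Demon strategy $\Delta$ in $\mathbb{G}^n_\mathcal{M}$ might therefore enter a $Q_T$-state at which the Traveler's protocol set is empty, which conflicts with the declared type $P\colon Q\times\Ag\to 2^{act}\setminus\emptyset$ and does not correspond to any $n$-strategy. You gesture at this (``the strictness \ldots\ guarantees the Traveler always has a legal move''), but that remark covers only the forward direction. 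For the backward direction you should either (i) observe that the paper's construction implicitly excludes such dead $Q_T$-states (since otherwise $P$ is ill-typed), or (ii) argue that any $\Delta$ witnessing the $\until$/$\release$ objective can be modified, without loss, to never block all outgoing edges --- e.g.\ by dropping one edge from the blocked set, which can only shrink $Out^D$ in a way that preserves the objective. Either fix is routine, but without it the ``conversely'' clause has a small gap.
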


\section{Case study}\label{sec:example}
Based on an \abbrv{AG} as presented in Section~\ref{sec:background}, we would like to check whether there are \abbrv{MTD} response strategies to satisfy some security objectives. To achieve this, we assume that:
$(1)$ The defender always knows the \abbrv{AG} state reached by the attacker (called \emph{attacker current state}). $(2)$ At every moment, there is a unique attacker current state in the \abbrv{AG}.
$(3)$ When detecting the attacker current state, the defender can activate a (or a subset of) \abbrv{MTD}(s) temporarily removing an (a subset of) outgoing edge(s). The defender cannot remove edges that are not outgoing from the attacker current state. $(4)$ The sum of the costs associated to the subset of \abbrv{MTD}s activated is less than a given threshold. $(5)$ When the attacker launches an attack from its current state, if the corresponding edge has not been removed by the defender, then the attack always succeeds (i.e. the attacker reaches the next state). $(6)$ When the attacker launches an attack from its current state, if the corresponding edge has been removed by the defender, then the attack always fails (i.e. the attacker stays in its current state).

Consider the model in Figure~\ref{fig:exp}. We can assume that when reaching state $s_1$, $s_3$, or $s_5$ the attacker has root privilege on a given critical server $s$. In addition, if the attacker completes attack steps $a_6$ or $a_7$ (that is, it reaches state $s_5$), then the defender will obtain information on the identity of the attacker. 
Let $a$ be an atomic proposition that expresses the fact that the identity of the attacker is known. Let $r_s$ be an atomic proposition expressing the fact that the attacker has root privilege on the server $s$.  
We can express, via \abbrv{TOL} formulas, the following security objectives: 

 \begin{itemize}
     \item \emph{The attacker will never be able to obtain root privileges on server s unless the defender can obtain information about his identity within $3$ time units}: that is, either we want the attacker to never reach a state satisfying $r_s$ \emph{or} if the attacker reaches such a state, the defender wants to be able to identify it within 3 time units ($a$). By using $t_1$ as a variable, the following OL formula captures the objective: $\varphi_1:= j.\naww{\sabotage_{t_1}}\globaly( \neg r_s \vee (r_s\to \naww{\sabotage_{t_1}} \mathsf{F ({ j \le 3}} \wedge a)))$. 

 \item \emph{While the defender has not obtained information about the attacker identity within $5$ time units, the attacker has not root privilege on server $s$}: that is, we want $r_s$ to be false \emph{until} we have identified the attacker ($a$) within 5 time units, if such an identification ever happens. Thus, by using $t_2$ as a variable for a given threshold, we can write our objective by using the weak-until connective: $\varphi_2:= j.\naww{\sabotage_{t_2}} (\neg r_s\, \wedge j \le 5 \  \mathsf{W}\, a)$.
 \end{itemize}
  
Suppose that $t_1$ and $t_2$ are respectively $3$ and $4$. Let $\mathcal{M}$ = $\abbrv{WTS}(\mathcal{A})$, we have that $\mathcal{M},s_0\models \varphi_1 \land \varphi_2$. To satisfy  $\varphi_1$  consider the $3$-memoryless strategy $\strat_1$ that associates $\set{\tuple{s_1,s_2}}$ to $s_1$, $\set{\tuple{s_3,s_4}}$ to $s_3$, and  $\emptyset$ to any other state of $\mathcal{M}$. Remark that for any path $\pi\in Out(s_0,\strat_1)$ and any $i\in \mathbb{N}$ we have that $\mathcal{M},\pi_i \models r_s$ iff $\pi_i\in \set{s_1,s_3,s_5}$. Thus, we must establish that $\mathcal{M}$ satisfies $\naww{\sabotage_n} \mathsf{F}(j \le 3 \ \wedge \ a)$ on $s_1$ (resp. $s_3$ and $s_5$). To do so, we remark that  $Out(s_1,\strat_1)$ (resp. $Out(s_3,\strat_1)$ and $Out(s_5,\strat_1)$) only contains the path $s_1,s_3,s_5^\omega$ (resp. $s_3,s_5^\omega$ and $s_5^\omega$) and that $\mathcal{M},s_5 \models a$. Thus, we have obtained that there is a strategy (i.e. $\strat_1$) such that for all $\pi \in Out(s_0,\strat_1)$ and all $i\in \mathbb{N}$ either $\mathcal{M},\pi_i\models \neg r_s$ or if $\mathcal{M},\pi_i \models r_s$ then there is a strategy ($\strat_1$ itself) such that $\mathcal{M},\rho_j \models a$ for some $j\geq 1$ and for all $\rho\in Out(\pi_i,\strat_1)$, as we wanted. Remark that if $t_1<3$ then it is not possible to satisfy $\varphi_1$ in $\mathcal{M}$ at $s_0$. 
For the specification $\varphi_2 = j.\naww{\sabotage_4} (\neg r_s \ \wedge \ j \le 5 \,\mathsf{W}\, a)$, consider the $4$-memoryless strategy $\strat_2$ that associates $\set{\tuple{s_0,s_1}}$ to $s_0$, $\set{\tuple{s_2,s_1},\tuple{s_2,s_3}}$ to $s_2$, $\set{\tuple{s_4,s_3}}$ to $s_4$ and $\emptyset$ to $s_5$. The only path in $Out(s_0,\strat^\star)$ is $s_0,s_2,s_4,s_5^\omega$ and since $s_5$ satisfies $a$ and all the other $s_i$ do not satisfy $ r_s$ we obtain the wanted result. 

\begin{figure}[h]
  \centering
  \includegraphics[width=77mm, height=30mm]{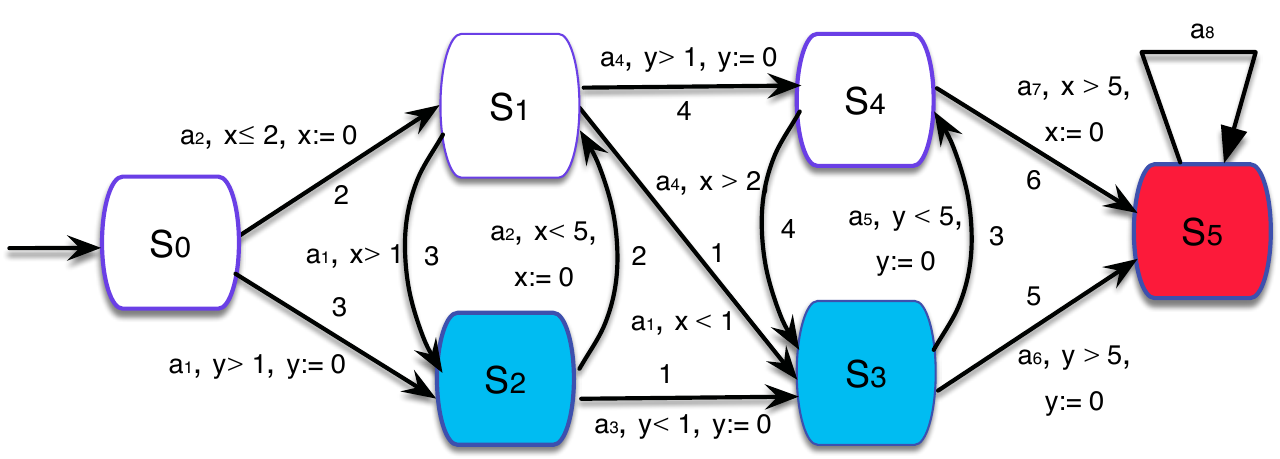}
    \vspace{-4pt}
  \caption{A \abbrv{WTA} from \cite{CLM23} where states $s_1,\, s_3$ and $s_5$ represent the goals of the attacker and the blue nodes satisfy $r_s$, the red node satisfies both $a$ and $r_s$, and the white ones satisfy neither $r_s$ nor $a$.}
  \label{fig:exp}
 \end{figure}




\section{Implementation and Validation}\label{sec:vitamin}
We implemented a \abbrv{TOL} verification algorithm within the \abbrv{VITAMIN} model checker to validate our approach and assess its runtime performance. The \abbrv{VITAMIN} tool is a prototype framework for modeling and verifying multi-agent systems, designed for both accessibility and extensibility: non-experts can use it effectively, while experts can modularly extend it with additional features~\cite{ferrando2025vitamincompositionalframeworkmodel}. To the best of our knowledge, \abbrv{VITAMIN} is currently the only tool supporting \abbrv{OL} and \abbrv{OATL}, making it a natural choice for extending with real-time verification of \abbrv{TOL}. Compared to mainstream model checkers such as \abbrv{MCMAS}~\cite{lomuscio2017mcmas} and \abbrv{STV}~\cite{stv}, implementing our algorithm in \abbrv{VITAMIN} was more straightforward and better aligned with our objectives.

A brief description of the architecture is given to familiarize readers with the tool's overall design and components, a more detailed description can be found in~\cite{ferrando2025vitamincompositionalframeworkmodel}. The main three modules depicted in Figure~\ref{fig:vitamin} are introduced below:
\begin{itemize}
    \item Logics: Holds the parsers for each supported logic, at the time of writing 12 logics are supported.
    \item Models: Responsible for storing the code that parses input files and loads them into memory as a Concurrent Game Structure (\abbrv{CGS}).
    \item Model checker/verifier: Is in control of verifying properties of models using the specified logic and model, there's one for each supported logic.
\end{itemize}

Our contribution implied changes in all modules, and enabled the introduction of a (i) parsing of real-time formulas/constraints and (ii) real-time verification algorithm for \abbrv{TCGS}s using backwards exploration of the generated zone graph, using the symbolic on-the-fly approach of Section~\ref{sec:mc}.

\begin{figure}[h]
  \centering
  \includegraphics[width=0.6\columnwidth]{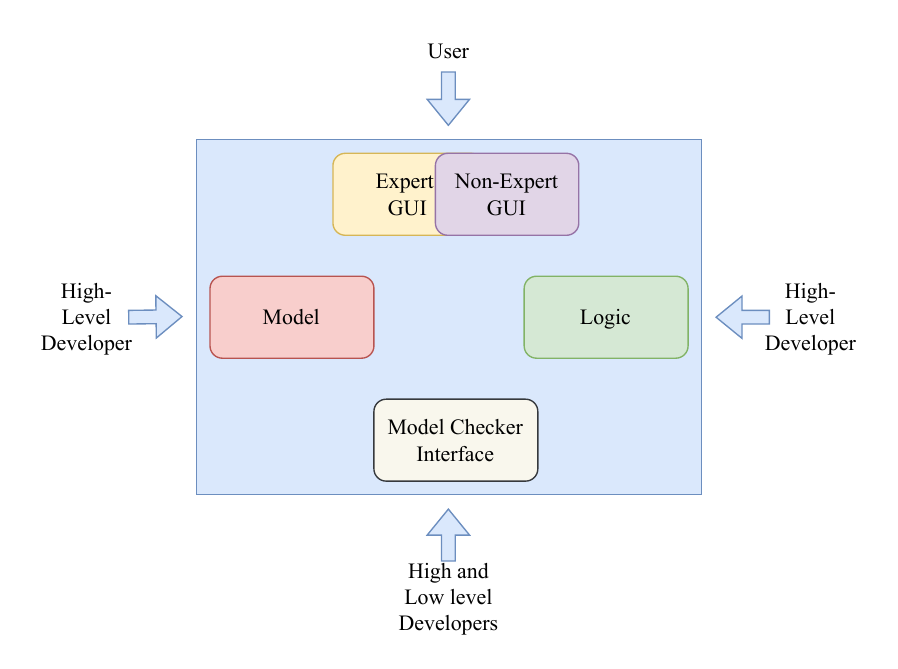}
  \caption{\abbrv{VITAMIN} high-level architecture~\cite{FerrandoMalvone2025}.}
  \label{fig:vitamin}
 \end{figure}

\subsection{Experimental methodology}
This section outlines the chosen methodology to validate the theoretical approach of the past sections. Our aim is to provide general runtime metrics of the verification algorithm with a number of selected case studies of varying size, the implementation in Python alongside the experimental data is available at: \url{https://gitlab.telecom-paris.fr/david.cortes/vitamin}.

We have crafted two models to serve as case studies, described in the following subsections. The first is the \textit{pipeline} model, a simple sequence of $k$ nodes and $k$ edges forming a single path. Finally, we examine a strongly connected model of $k$ nodes, designed to showcase \abbrv{TOL}'s ability to verify properties across paths, a capability inherited from \abbrv{CTL}. A more comprehensive experimental evaluation is planned and will be added in a subsequent version.


Although, there are publicly available benchmarks for other real-time model checkers~\cite{uppaalBenchmarks,morbe2015fully}, many of them are nearly a decade old. Additionally, many of the models or formulas are no longer available, and those that were available require substantial adaptions to be used in \abbrv{VITAMIN}. As a result, even reproducing a baseline for our experiments proved difficult, which we see as a broader limitation in the field of real-time formal verification. Furthermore, our objective is not to compare \abbrv{VITAMIN} with other more established and feature-rich tools, given that it is still a prototype direct comparisons at this stage wouldn't be meaningful.

\subsection{Case studies}
Here, we detail each model and their \abbrv{TOL} formulas used to generate the experimental data on \abbrv{VITAMIN}.
\subsubsection*{Pipeline}
The objective of this model is to test the performance of the real-time verification algorithm on simple models where there's at most two possible transitions the attacker can take at any step (to wait, or take a discrete transition). Figure~\ref{fig:pipeline-model} shows a \abbrv{WTA} for the instance where $k=4$.
\begin{figure}[h]
    \centering
    \includegraphics[width=0.5\linewidth]{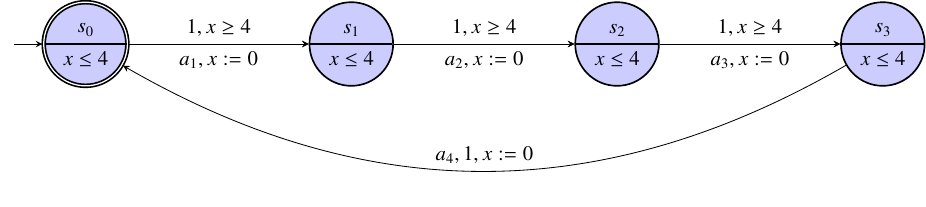}
    \caption{The pipeline model with $k=4$.}
    \label{fig:pipeline-model}
\end{figure}

The \abbrv{TOL} formula we will check against this model is $$\varphi_4 = j.\naww{\sabotage_1} \globaly(s_3 \implies j\geq 16)$$
Which captures the objective: \textit{Anytime the attacker reaches $s_3$, at least 16 time units will have passed}

And the more general formula is \textit{Anytime the attacker reaches $s_n$, at least $k*k$ time units will have passed}:
$$\varphi = j.\naww{\sabotage_1} \globaly(s_n \implies j\geq k*k)$$
The reason we chose this formula is that we can force the algorithm's worst case exploration with it, given that $s_n$ is the last location in the system, the backwards reachability algorithm starts on it, and then it will compute every time predecessor until the initial location is reached, effectively traversing most of the generated Zone Graph.

\subsubsection*{Mesh}
Figure~\ref{fig:connected-model} shows the corresponding \abbrv{WTA} for this model with $k=4$, where each node is reachable from every other node. The idea here is to collect runtime metrics that allows us to have a rough idea of how much resources could be consumed on large strongly connected graphs resembling the topology of mesh networks.
\begin{figure}[h]
    \centering
    \includegraphics[width=0.3\linewidth]{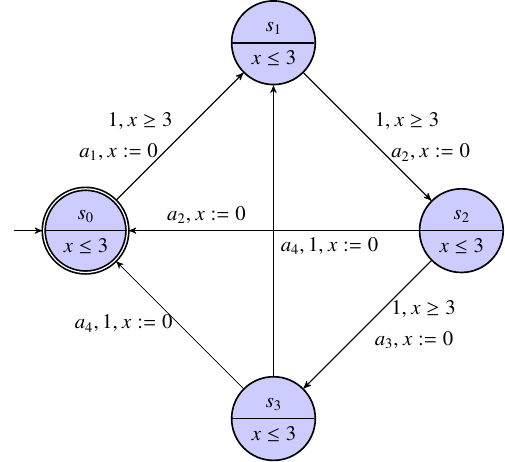}
    \caption{The strongly connected model with $k=4$.}
    \label{fig:connected-model}
\end{figure}

We're interested in running this model against the \abbrv{TOL} formula: 
$$\varphi = j.\naww{\sabotage_1} (s_n \land j\geq k*k)$$
Which means: \textit{There's a strategy for the attacker where the last location ($s_n$) is reached in at least $k*k$ time units}. There are multiple paths to $s_n$, but since we're also requiring a minimum of time units, it will traverse a sizable chunk of the Zone Graph trying to find one instance that satisfies the property. We believe this will give us reasonable performance metrics for these kinds of graphs.

\subsection{Preliminary evaluation}
Our current prototype produces results on small to medium-scale instances, which we present here as initial validation. Therefore, the following data should be regarded as preliminary, with a complete benchmark evaluation deferred to a future update.

Experiments were run on an Apple M4 system with 16~\abbrv{GB} of unified memory. For each case study, the model size was varied up to a maximum of 30.

\begin{table}
    \centering
    \begin{tabular}{|c|c|c|}
    \hline
    \textbf{Case study \& input size} & \textbf{Runtime~(ms)} & \textbf{Max memory~(\abbrv{KB})}\\
    \hline
    Pipeline $k=4$ & $7.2\pm0.8$ & $25.63\pm0.96$\\
    Pipeline $k=12$ & $13.2\pm2.3$ & $30.16\pm1.04$\\
    Pipeline $k=16$ & $19\pm6.6$ & $34.24\pm2.34$\\
    Pipeline $k=22$ & $21\pm3.4$ & $49.70\pm0.74$\\
    Pipeline $k=30$ & $25.6\pm5.3$ & $77.89\pm1.75$\\
    \hline
    \end{tabular}
    \caption{Obtained results with the pipeline model. We present the mean average runtime in milliseconds, and the mean peak allocated memory in kilobytes.}
    \label{tab:pipeline}
\end{table}

\begin{table}
    \centering
    \begin{tabular}{|c|c|c|}
    \hline
    \textbf{Case study \& input size} & \textbf{Runtime~(ms)} & \textbf{Max memory~(\abbrv{KB})}\\
    \hline
    Mesh $k=4$ & $6.2\pm1.3$ & $26.27\pm1.3$\\
    Mesh $k=12$ & $16.2\pm3.9$ &$43.42\pm2.6$\\
    Mesh $k=16$ & $19.2\pm4.1$ & $67.11\pm3.9$\\
    Mesh $k=22$ & $24.4\pm2.4$ & $112.86\pm2.5$\\
    Mesh $k=30$ & $37.8\pm3.4$ & $196.65\pm2.1$\\
    \hline
    \end{tabular}
    \caption{Obtained results with the mesh model. We present the mean average runtime in milliseconds, and the mean peak allocated memory in kilobytes.}
    \label{tab:strongly-connected}
\end{table}

We measured the average runtime and max allocated memory over five runs for every case study and input size, using Python's tracemalloc and time modules. Then, the collected data were compiled into multiple CSV files and processed using Python notebooks. The aggregated results are shown in tables~\ref{tab:pipeline} and~\ref{tab:strongly-connected}. Expectedly, we see the metrics increasing steadily with input size in both tables. The mesh model generally consumed more memory than the pipeline model, while keeping comparable execution times, except for the largest input size. The Python notebook along with the raw collected data is available at \url{https://gitlab.telecom-paris.fr/david.cortes/vitamin-benchmarks}.

\subsection{Challenges and future work}
Since \abbrv{VITAMIN} is still in prototype phase, certain engineering aspects remain under development. This led to challenges such as build issues related to missing dependencies or unsupported versions, opening an avenue for exploring containerized solutions to streamline time to development. Additionally, automating the generation and collection of experimental data proved difficult given there's no command-line access to the tool, we plan on working on this limitation next to support and provide much more comprehensive experimental analysis for an upcoming version of this paper. 

We emphasize that these challenges are not fundamental obstacles but reflect the current prototype status. Addressing them will further strengthen the tool as it evolves into a stable and extensible platform.

\section{Related Work}\label{sec:relwork}

Lately, many papers have focused on the strategic capabilities of agents playing within dynamic game models. In this section, we compare our approach with some of these papers. 

\textbf{Untimed Games and Strategic Logics}~\cite{vanBenthem2005,Lding2003ModelCA,aucherSML} some research related to sabotage games  have been introduced by van Benthem with the aim of studying the computational complexity of a special class of graph reachability problems in which an agent has the ability to delete edges. 
%
To reason about sabotage games, van Benthem introduced Sabotage Modal Logic (\abbrv{SML}). The model checking problem for the sabotage modal logic is \abbrv{PSPACE-complete}~\cite{Lding2003ModelCA}. Our version of the games is not comparable to the sabotage games, because we provide the possibility to temporarily select subsets of edges, while in the sabotage games the saboteur can only delete one edge at a time. In this respect, our work is related to~\cite{CattaLM23}, where the authors use an extended version of sabotage modal logic, called Subset Sabotage Modal Logic (\abbrv{SSML}), which allows for the deactivation of certain subsets of edges of a directed graph. The authors show that the model checking problems for such logics are decidable. 
Furthermore, we recall that \abbrv{SSML} is an extension of \abbrv{SML}, but does not include temporal operators. Also, neither \abbrv{SML} nor \abbrv{SSML} takes into account quantitative information about the cost of edges, as we do.
In \cite{DynamicVadim} Dynamic Escape Games (\abbrv{DEG}) have been introduced.  A \abbrv{DEG} is a variant of weighted two-player turn-based reachability games. In a \abbrv{DEG}, an agent has the ability to inhibit edges. In \cite{CLM23} have been introduced an untimed Obstruction Logic (\abbrv{OL}) which allows reasoning about two-player games played on a labeled and weighted directed graph. Thus, our \abbrv{TOL} is an extension of \abbrv{OL}. \abbrv{ATL} \cite{AHK02} and \abbrv{SCTL} \cite{APS23} are extensions of \abbrv{CTL} with the notion of strategic modality. These kinds of logics are used to express properties of agents as their possible actions. However, all these logics do not include quantitative information about costing edges, real-time and temporal operators.

%


\textbf{Timed Games and Strategic Logics}~\cite{HPV06, KnapikPetrucciJamrogaTimed, FTM14, QAS19, APS23} several research works have focused on extending games and modal and temporal logics to the real-time domain.  The most established model in this respect is the timed game automata \cite{BFLM11, FTM14}. A Timed Game Automaton (\abbrv{TGA}) is a \abbrv{TA} whose set of transitions is divided among the different players. At each step, each player chooses one of her possible transitions, as well as some time she wants to wait before firing her chosen transition.  The logics \abbrv{ATL} and \abbrv{CTL} has also been extended to \abbrv{TATL} \cite{HPV06, KnapikPetrucciJamrogaTimed} and \abbrv{STCTL} \cite{APS23}, in which formula clocks are used to express the time requirements. It is exponentially decidable whether a \abbrv{TATL} formula satisfies a \abbrv{TGA} \cite{HPV06}. However, in \cite{APS23} was shown that \abbrv{STCTL} is more expressive than \abbrv{TATL} and the model checking problem for \abbrv{STCTL}  with  memoryless perfect information is of the same complexity as for \abbrv{TCTL}. Model checking for \abbrv{TATL} with continuous semantics is undecidable \cite{APS23}.  However, all these logics  do not use dynamic models.

\section{Conclusions}\label{sec:end}
In this paper, we presented Timed Obstruction Logic (\abbrv{TOL}), a logic that allows reasoning about two-player games with real-time goals, where one of the players has the power to locally and temporarily modify the timed game structure. We then proved that its model checking problem is \abbrv{PSPACE-complete}. Furthermore, we showed how \abbrv{TOL} expresses cybersecurity properties in a suitable way. There are several directions that we would like to explore for future work. 
A possible extension would be to consider timed games with many players, between a demon and \emph{coalitions} of travelers. 
In our opinion, such an extension would have the same relationship with the \abbrv{TATL} logic as \abbrv{TOL} has with \abbrv{TCTL}. Another extension could be to extend TOL with probability, since many cybersecurity case studies relate cyberattacks to the probability of success of the attack. Finally, another important line would be to introduce imperfect information in our setting. Unfortunately, this context is generally non-decidable \cite{DimaT11}. In order to overcome this problem, we could use an approximation to perfect information \cite{BelardinelliFM23}, a notion of bounded memory \cite{BelardinelliLMY22}, or some hybrid technique \cite{FerrandoM22,FerrandoM23}.

\bibliography{biblio}
\end{document}